\documentclass[twocolumn]{IEEEtran} 

\usepackage{amsmath,epsfig}
\usepackage[T1]{fontenc}
\usepackage{psfrag}
\usepackage{array}
\usepackage{cite}
\usepackage{amsfonts}
\usepackage{algorithm}
\usepackage{algorithmic}
\usepackage[utf8]{inputenc} 

\usepackage{amsthm}

\newtheorem{prop}{Proposition}

\usepackage{amsmath}
\usepackage{color}
\usepackage{fancybox}
\usepackage{array}
\usepackage[dvipsnames]{xcolor}
\definecolor{MyGreen}{rgb}{0.4660 0.6740 0.1880}
\definecolor{MyPink}{rgb}{1 0.6 0.7373}

\usepackage{graphicx}

\usepackage[caption=false,listofformat=subsimple,labelformat=simple]{subfig}

\newcolumntype{P}[1]{>{\centering\arraybackslash}p{#1}}
\newcolumntype{M}[1]{>{\centering\arraybackslash}m{#1}}

\usepackage{circuitikz}
\usepackage{tikz}
\usetikzlibrary{fit,positioning}
\usetikzlibrary{shapes.geometric}
\usetikzlibrary{decorations.pathreplacing,calligraphy}
\usepackage{adjustbox}

\usepackage[outline]{contour}
\contourlength{0.75pt}

\usepackage{float}
\usepackage{url}

\newcommand{\B}[1]{{\mathbf{#1}}}

\DeclareFontFamily{OT1}{pzc}{}
\DeclareFontShape{OT1}{pzc}{m}{it}%
             {<-> s * [1.28] pzcmi7t}{} 
\DeclareMathAlphabet{\freqDom}{OT1}{pzc}
                                 {m}{it}
\DeclareMathOperator{\Tr}{Tr}
\DeclareMathOperator{\diag}{diag}
\usepackage{amsbsy} 

\usepackage{tcolorbox}
\tcbuselibrary{listingsutf8} 

\usepackage{acronym}
\usepackage{soul}
\usepackage{amssymb}


\makeatletter
\def\blfootnote{\xdef\@thefnmark{}\@footnotetext}
\makeatother

\makeatletter
\def\blfootnote{\xdef\@thefnmark{}\@footnotetext}
\makeatother

\acrodef{3GPP}[3GPP]{3rd Generation Partnership Project}
\acrodef{5G}[5G]{fifth-generation}
\acrodef{6G}[6G]{sixth-generation}
\acrodef{ADC}[ADC]{Analog-to-Digital Converter}
\acrodef{AO}[AO]{Alternating Optimization}
\acrodef{AoA}[AoA]{Angle of Arrival}
\acrodef{AoD}[AoD]{Angle of Departure}
\acrodef{APN}[APN]{Analog Precoding Network}
\acrodef{ARV}[ARV]{Array Response Vector}
\acrodef{ASK}[ASK]{Amplitude-Shift Keying}
\acrodef{AWGN}[AWGN]{Additive White Gaussian Noise}
\acrodef{BER}[BER]{Bit Error Ratio}
\acrodef{BF}[BF]{Beamforming}
\acrodef{BFN}[BFN]{Beamforming Network}
\acrodef{BS}[BS]{Base Station}
\acrodef{CB}[CB]{conjugate beamforming}
\acrodef{COMP}[COMP]{Covariance OMP}
\newacro{CSI}{Channel State Information}
\acrodef{DAC}[DAC]{Digital to Analog Converter}
\acrodef{DAS}[DAS]{distributed antenna system}
\acrodef{DBS}[DBS]{Distance-Based Scheduling}
\acrodef{DCS}[DCS]{Digital Communication System}
\acrodef{DCOMP}[DCOMP]{Dynamic COMP}
\newacro{DFT}{Discrete Fourier Transform}
\acrodef{DL}[DL]{downlink}
\acrodef{DOA}[DOA]{Direction Of Arrival}
\acrodef{DoF}[DoF]{degrees-of-freedom}
\acrodef{DPC}[DPC]{dirty-paper coding}
\acrodef{ELAA}[ELAA]{extremely-large aperture array}
\acrodef{ESD}[ESD]{Energy Spectral Density}
\acrodef{FCC}[FCC]{Federal Communications Commission}
\newacro{FDD}{Frequency-Division Duplex}
\acrodef{FF}{far-field}
\acrodef{FR}{Frequency Range}
\acrodef{FLS}[FLS]{Front Line Scheduling}
\acrodef{FSK}[FSK]{Frequency-Shift Keying}
\acrodef{FT}[FT]{Fourier Transform}
\acrodef{HT}[HT]{Hilbert Transform}
\acrodef{HI}[HI]{Harmonic Interference}
\acrodef{ICI}[ICI]{Inter-Carrier Interference}
\acrodef{gMIMO}[gMIMO]{gigantic MIMO}
\acrodef{IL}[IL]{Insertion Losses}
\acrodef{ISI}[ISI]{Inter-Symbol Interference}
\acrodef{ITU}[ITU]{International Telecommunications Union}
\acrodef{IUI}[IUI]{inter-user interference}
\acrodef{JSDM}[JSDM]{Joint Spatial Division and Multiplexing}
\acrodef{LBFN}[LBFN]{Linear Beamforming Network}
\acrodef{LLF}[LLF]{Log-Likelihood function}
\acrodef{LMD}[LMD]{Linearly Modulated Digital}
\acrodef{LoS}[LoS]{Line-of-Sight}
\newacro{MA}{Modular Array}
\newacro{MAP}{Maximum A Posteriori}
\acrodef{MIMO}[MIMO]{Multiple-Input Multiple-Output}
\newacro{ML}{Maximum Likelihood}
\newacro{MMSE}{Minimum Mean Squared Error}
\acrodef{MMV}[MMV]{Multiple Measurement Vector}
\acrodef{mmWave}[mmWave]{millimeter wave}
\acrodef{MRC}[MRC]{Maximum Ratio Combining}
\acrodef{MRT}[MRT]{Maximum Ratio Transmission}
\newacro{MSE}{Mean Squared Error}
\acrodef{MUSIC}[MUSIC]{MUltiple SIgnal Classification}
\acrodef{NF}{near-field}
\acrodef{NLoS}[NLoS]{Non Line-of-sight}
\acrodef{NMSE}{Normalized Mean Squared Error}
\acrodef{OFDM}[OFDM]{Orthogonal Frequency-Division Multiplexing}
\acrodef{OFDMA}[OFDMA]{Orthogonal Frequency-Division Multiple Access}
\acrodef{OMP}[OMP]{Orthogonal Matching Pursuit}
\acrodef{OSMP}[OSMP]{Orthogonal Subspace Matching Pursuit}
\acrodef{PA}[PA]{Power Amplifier}
\acrodef{PS}[PS]{Phase Shifter}
\acrodef{PSK}[PSK]{Phase-Shift Keying}
\acrodef{PW}{planar wavefront}
\acrodef{QAM}[QAM]{Quadrature Amplitude Modulation}
\acrodef{RF}[RF]{Radio Frequency}
\acrodef{RFC}[RFC]{Rayleigh Fading Channel}
\acrodef{RSS}[RSS]{Rectangular-Search Scheduler}
\acrodef{SDMA}[SDMA]{space-division multiple access}
\acrodef{SE}[SE]{spectral efficiency}
\acrodef{SER}[SER]{Symbol Error Rate}
\acrodef{SIC}[SIC]{Successive Interference Cancellation}
\acrodef{SR}[SR]{Sideband Radiation}
\acrodef{SINR}[SINR]{Signal-to-Interference-plus-Noise Ratio}
\acrodef{SLL}[SLL]{Side-Lobe Level}
\acrodef{SOCP}[SOCP]{Second-Order Cone Program}
\acrodef{SOMP}[SOMP]{Simultaneous-Orthogonal Matching Pursuit}
\acrodef{SPDT}[SPDT]{Single-Pole-Double-Throw}
\acrodef{SPST}[SPST]{Single-Pole-Single-Throw}
\acrodef{SR}[SR]{Sideband Radiation}
\acrodef{SS}[SS]{Spatial Smoothing}
\acrodef{SNR}[SNR]{Signal-to-Noise Ratio}
\acrodef{SUS}{successive user selection}
\acrodef{SW}{spherical wavefront}
\newacro{TDD}{Time-Division Duplex}
\acrodef{TM}[TM]{Time Modulation}
\acrodef{TMA}[TMA]{Time-Modulated Array}
\acrodef{ULA}[ULA]{Uniform Linear Array}
\acrodef{UM-MIMO}[UM-MIMO]{Ultra Massive MIMO}
\acrodef{UPA}[UPA]{Uniform Planar Array}
\acrodef{UPW}[UPW]{Uniform Planar Wave}
\acrodef{USW}[USW]{Uniform Spherical Wave}
\acrodef{VGA}[VGA]{Variable Gain Amplifier}
\acrodef{VPS}[VPS]{Variable Phase Shifter}
\acrodef{VR}[VR]{visibility regions}
\acrodef{XL}[XL]{extra-large}
\acrodef{XL-array}[XL-array]{extra-large array}
\acrodef{XL-MIMO}[XL-MIMO]{extra-large Multiple-Input Multiple-Output}
\acrodef{XL-ULA}[XL-ULA]{extra-large uniform linear array}
\acrodef{ZF}[ZF]{Zero-Forcing}


\begin{document}

\title{{User Selection in Near-Field Gigantic MIMO Systems with Modular Arrays}}

\author{José P. Gonz\'alez-Coma, Santiago Fern\'andez, and F. Javier L\'opez-Mart\'inez}

\maketitle
\begin{abstract}
\acp{MA} are a promising architecture to enable multi-user communications in next-generation multiple-input multiple-output (MIMO) systems based on extra-large (XL) or gigantic MIMO (gMIMO) deployments, trading off improved spatial resolution with characteristic interference patterns associated with grating lobes. In this work, we analyze whether \acp{MA} can outperform conventional collocated deployments, in terms of achievable sum-\ac{SE} and served users in a multi-user downlink set-up. First, we provide a rigorous analytical characterization of the inter-user interference for modular gMIMO systems operating in the near field. Then, we leverage these results to optimize the user selection and precoding mechanisms, designing two algorithms that largely outperform existing alternatives in the literature, with different algorithmic complexities. Results show that the proposed algorithms yield over $70\%$ improvements in achievable sum-spectral efficiencies compared to the state of the art. We also illustrate how \acp{MA} allows us to serve a larger number of users thanks to their improved spatial resolution, compared to the collocated counterpart.
 \end{abstract}
\begin{IEEEkeywords}
Modular arrays, near-field communications,
user selection, massive MIMO, gigantic MIMO.\end{IEEEkeywords}
\blfootnote{\noindent Manuscript received January 09, 2025; revised March 13, 2025. The review of this paper was coordinated by XXXX. This work is supported by grant PID2023-149975OB-I00 (COSTUME) funded by MICIU/AEI/10.13039/501100011033, and by ERDF/EU. The authors thank the Defense University Center at the Spanish Naval Academy for their support. This work has been submitted to the IEEE for publication. Copyright may be transferred without notice, after which this
version may no longer be accessible.}

\blfootnote{\noindent J.P. Gonz\'alez-Coma is with the Defense University Center at the Spanish Naval Academy, 36920 Marín, Spain. Contact email: jose.gcoma@cud.uvigo.es.}

\blfootnote{\noindent S. Fern\'andez and F.J. L{\'o}pez-Mart{\'i}nez are with the Dept. Signal Theory, Networking and Communications, Research Centre for Information and Communication Technologies (CITIC-UGR), University of Granada, 18071, Granada, Spain, and also with the Communications and Signal Processing Lab, Telecommunication Research Institute (TELMA), Universidad de M\'alaga, M\'alaga, 29010, (Spain). }
\acresetall
\section{Introduction}

\IEEEPARstart{M}{assive} \ac{MIMO} technology is one of the key pillars for the success of \ac{5G} wireless networks, and continuous efforts are made to improve its operation in the subsequent releases by \ac{3GPP} towards \ac{5G}-advanced \cite{Liberg2024}. As the different pathways to conceive the core technologies within the future \ac{6G} standard are being discussed, the integration of an even larger number of antennas is facilitated \textcolor{black}{as} operational frequency grows. Now, while academic research boldly advocates for moving beyond \ac{mmWave} into subTHz bands \cite{Wang2023,Ning2023} to enable the \ac{UM-MIMO} concept, this could exacerbate the practical limitations of \ac{5G} commercial deployments in the \ac{FR} of \ac{mmWave} frequencies (FR2) \cite{Rochman2021}. Conversely, industry plays it safe and stands up for modest increases in the operational frequencies; in this sense, there seems to be a consensus that the so-called upper midband (FR3) covering the 7-24 GHz range will play a central role in early \ac{6G} \cite{Cui2023,Emil2024}, as confirmed by regulatory bodies such as the \ac{FCC} and \ac{ITU} \cite{ITU2023,Kang2024}.

The evolution of antenna array technologies in the upper-midband allows to integrate thousands of antennas in a reasonable area at the \ac{BS} side. This, combined with the use of distributed array deployments different from conventional collocated architectures \cite{Jeon2021,Li2022,Li2024}, opens the door to extend the \ac{NF} features that enable beamfocusing to the entire operational range of the \ac{BS} in \ac{6G} \cite{Emil2024}. In this sense, the evolution of conventional collocated arrays to sparse or distributed deployments is shown to be beneficial for use cases like enhanced mobile broadband, localization and sensing. Recently, \acp{MA} have been proposed \cite{Jeon2021,Li2022} in the context of \ac{XL}-\ac{MIMO} or \ac{gMIMO} evolutions. This newly proposed architecture, \textcolor{black}{which includes the uniform sparse and collocated arrays as special cases,} deploys multiple modules or sub-arrays with a moderate number of elements, classically \textcolor{black}{(but not necessarily)} spaced $\lambda/2$, with $\lambda$ being the operational signal wavelength. Now, individual modules are separated much more than $\lambda$, thus increasing the overall array aperture. \textcolor{black}{In some instances, conventional XL collocated arrays or sparse arrays cannot be deployed due to physical constraints, e.g. because of the irregular shape of mounting structures like facades with windows, and MAs become a feasible solution. In other cases, MAs are shown to} offer an alternative to implement coordinated transmission between sub-arrays with minimal synchronization requisites and a reduced backhaul overhead, \textcolor{black}{which can be beneficial to shape interference and to improve localization and sensing performances \cite{Emil2024,Li2024} compared to collocated implementations}. Noteworthy, \acp{MA} allow the operation in the \ac{NF} for a larger distance, even when individual sub-arrays may operate in the \ac{FF} zone. This offers an improved spatial resolution, at the expense of the appearance of grating lobes due to the large inter-module separation \cite{Li2023}, which cause characteristic interference patterns that affect multi-user operation. 

To avoid achievable sum-\ac{SE} performance plummeting as the \ac{BS} intends to serve a larger number of users, user selection\footnote
{Also referred to as user scheduling, user grouping, or user dropping in the literature.} plays a pivotal role in the operation of multi-user communication systems. Selecting the set of users that will be allocated for transmission in the same time and frequency resource, and designing the corresponding precoding vectors and power allocation policies, is known to be a highly complex combinatorial and non-convex optimization problem, even when linear precoders are used \cite{GuUtDi09}. The special features of \ac{NF} propagation allow to better exploit the available spatial \ac{DoF} \cite{Kosasih2024} which, in turn, allows for improved performance when taken into consideration in the user selection process \cite{GonzalezComa2021,Souza2023,Anaya2023}. For the specific cases of \acp{MA} \textcolor{black}{and sparse arrays}, this problem remains largely unexplored, and was only addressed in \cite{Li2023}, where a greedy approach was explored to relax the computational burden. While showing some sum-\ac{SE} improvements over the random user grouping case, it fails to provide a convincing performance when compared to state-of-the-art alternatives.

In this paper, we address the problem of user selection in the context of \ac{gMIMO} systems equipped with \acp{MA}. First, we perform an analytical characterization of the multi-user interference patterns with \ac{MA} deployments. This allows to obtain important insights related to the behavior of interference, which are later leveraged to design algorithms for joint user selection and precoding in this context. 
 The performance and complexity of the proposed algorithms are evaluated in practical scenarios of gMIMO deployments in the upper midband, showing remarkable performance improvements over the greedy approach in \cite{Li2023}.
 
{The remainder of this paper is structured as follows:  In Section \ref{Sec:SystemModel}, the system model for a multi-user gMIMO system under \ac{SW} propagation is introduced. Then, the analytical characterization of the interference patterns is carried out in Section \ref{sec:interferenceModular}. The algorithms for user selection and precoding design are presented in Section \ref{Sec:US}, and their performance and complexity are assessed in Section \ref{Sec:NR}. Finally, conclusions are outlined in Section \ref{Sec:Conclusions}.}

\textit{Notation:} $a$ is a scalar, $\mathbf{a}$ is a vector, and $\mathbf{A}$ is a matrix. 
Transpose and conjugate transpose of $\mathbf{A}$ are denoted by $\mathbf{A}^{\operatorname{T}}$ and $\mathbf{A}^{\operatorname{H}}$, respectively.  Calligraphic
letters, e.g., $\mathcal{A}$ denote sets and sequences. $\mid \mathcal{A} \mid$ represents the set cardinality and $\mathcal{A}\setminus \lbrace b \rbrace$ stands for the exclusion of $b$ from $\mathcal{A}$, $\odot$ and $\otimes$ represent the Hadamard and the Kronecker product, respectively.
Finally, the expectation operator is $E[\cdot]$ and $\parallel \cdot  \parallel$ denotes the Euclidean norm.

\section{System Model}
\label{Sec:SystemModel}
Let us consider a modular \ac{gMIMO} setup deployed at the \ac{BS}, where $N$ modules composed by $M$ antenna elements each form the array, hence consisting of a total of $N\times M$ elements as illustrated in Fig. \ref{fig:XL_MIMO_SystemModel_2users_new}. Without loss of generality, a modular \ac{ULA} is assumed to be placed along the $y$-axis, symmetric around the origin. 
For the sake of notation simplicity, we consider that the module index $n$ and the antenna index $m$ for each module belong to the integer sets $\mathcal{N} = \{ 0, \pm 1, \cdots , \pm (N-1)/2 \}$ and $\mathcal{M} = \{ 0, \pm 1, \cdots , \pm (M-1)/2 \}$. Consequently, the position of the $m$-th element within module $n$ is
$[0, (nS + m)d]^T$, $\forall n \in \mathcal{N}$, and $\forall  m \in \mathcal{M}$. In the last expression, it is assumed a typical inter-element spacing for antennas within each module of half of the signal wavelength $\lambda$ denoted by $d$, and $S$ is the modular separation parameter, that may depend on the discontinuous surface of the practical installation structure, with $S \geq M$.
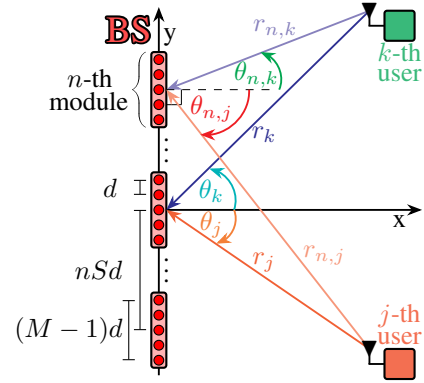
\begin{figure}[t]
\centering 
\captionsetup{justification=centering}
\begin{tikzpicture}
			\let\radius\undefined
		\newlength{\radius}
		\setlength{\radius}{0.65mm}
		
		\def\NN{0, 0.4, 0.8}
		\def\MM{0, 0.2, 0.4, 0.6, 0.8} 
		\def\S{4}
		
		\draw [line width=0.25mm,-{Stealth[length=2mm]}] (0,-0.2) -- (0,4.7);
		\node[centered, centered, black] at (0.15,4.3) {y};
		
		\draw [line width=0.25mm,-{Stealth[length=2mm]}] (0,2) -- (3.5,2);		
		\node[centered, centered, black] at (3.2,1.85) {x};
		
		\foreach \mm in {0, 1.6, 3.2}
		\draw[black,fill=red!30,rounded corners=1,thick]
		(-0.1,-0.1+\mm) rectangle ++(0.2,1);
		
		\foreach \n in \NN 
		\foreach \m in \MM 			
		\draw [black, fill=red] (0,\n*\S+ \m) circle (\radius);   
		\node[] at (0.1,1.3) {\vdots};
		\node[] at (0.1,1.2+1.7) {\vdots};
		
		\draw [|-|](-0.4,0) -- (-0.4,0.8);
		\node[centered, centered, black] at (-1.2,0.4) {$(M-1)d$};
		
		\draw [|-|](-0.25,2.2) -- (-0.25,2.4);
		\node[centered, centered, black] at (-0.65,2.3) {$d$};
		
		\draw [|-|](-0.25,2) -- (-0.25,0.4);
		\node[centered, centered, black] at (-0.8,1.2) {$n S d$};
		
		\node[centered, centered, red!70] at (-0.4,4.4) {\contour{black}{\Large{BS}}};
		
		\draw [pen colour={black},
		decorate, decoration = {calligraphic brace,	raise=5pt,
			amplitude=5pt}] (0,3.1) --  (0,4.1)
		node[pos=0.65,left=14pt,black]{$n$-th}
		node[pos=0.35,left=8pt,black]{module};		
		
		\node[centered, centered, Blue] at (2.1,3) {$r_{k,0,0}\triangleq r_k$};
		\draw [Blue, line width=0.25mm,-{Stealth[length=2mm]}] (2.8,4.65) -- (0.1,2);
		
		\node[centered, centered, BlueGreen] at (1.85,2.25) {$\theta_{k,0,0}\triangleq \theta_k$};
		\draw [BlueGreen, line width=0.25mm,-{Stealth[length=2mm]}] (1,2) to [bend right=50] (0.67,2.55);		
		
		\node[centered, centered, Blue!50] at (1.55,4.4) {$r_{k,n,m}$};
		\node[centered, centered, Green] at (1.35,3.8) {$\theta_{k,n,m}$};
		\draw [Blue!50, line width=0.25mm,-{Stealth[length=2mm]}] (2.8,4.65) -- (0.1,3.6);
		\draw [Green, line width=0.25mm,-{Stealth[length=2mm]}] (1.6,3.6) to [bend right=50] (1.35,4.1);		
		
		\draw [line width=0.1mm, dashed](0.1,3.6) -- (1.5,3.6);
		\draw [line width=0.05mm](0.3,3.6) -- (0.3,3.4);
		\draw [line width=0.05mm](0.3,3.4) -- (0.1,3.4);
		
		\draw[black,fill=Green!70,rounded corners=1,thick]
		(3,4.25) rectangle ++(0.4,0.4);
		\node[Green!70, centered, centered] at (3.2,4.1) {$k$-th};
		\node[Green!70, centered, centered] at (3.2,3.8) {user};
		\node[centered, centered] at (2.8,4.65) {$\blacktriangledown$};
		
		\draw [line width=0.25mm, black ] (2.7875,4.45) -- (3,4.45);
		\draw [line width=0.25mm, black ] (2.8,4.65) -- (2.8,4.45);

		\draw [RedOrange, line width=0.25mm,-{Stealth[length=2mm]}] (2.8,0.15) -- (0.1,2);
		\node[centered, centered, RedOrange] at (1.4,1.3) {$r_j$};
		
		\draw [Orange, line width=0.25mm,-{Stealth[length=2mm]}] (1,2) to [bend left=50] (0.76,1.55);		
		\node[centered, centered, Orange] at (0.7,1.8) {$\theta_j$};
	
	\draw [Red!50, line width=0.25mm,-{Stealth[length=2mm]}] (2.8,0.15) -- (0.1,3.6);
	\draw [Red, line width=0.25mm,-{Stealth[length=2mm]}] (1.2,3.6) to [bend left=50] (0.55,3);		
	\node[centered, centered, Red!50] at (2.2,1.4) {$r_{j,n,m}$};
	\node[centered, centered, Red] at (0.7,3.35) {$\theta_{j,n,m}$};
	
	\draw[black,fill=Red!70,rounded corners=1,thick]
	(3,-0.25) rectangle ++(0.4,0.4);
	\node[Red!70, centered, centered] at (3.2,0.6) {$j$-th};
	\node[Red!70, centered, centered] at (3.2,0.3) {user};
	\node[centered, centered] at (2.8,0.15) {$\blacktriangledown$};
	
	\draw [line width=0.25mm, black ] (2.7875,-0.05) -- (3,-0.05);
	\draw [line width=0.25mm, black ] (2.8,0.15) -- (2.8,-0.05);

	\end{tikzpicture}
\caption{\small A modular \ac{gMIMO} system with $N$ modules and $M$ antennas in each module communicating with single-antenna users.}
\label{fig:XL_MIMO_SystemModel_2users_new}
\end{figure}

The position of the $k$-th user is then determined by its angle with respect to the positive $x$-axis $\theta_k \in [ -\pi/2, \pi/2]$ and the distance to the $m$-th element in module $n$, such that
\begin{equation}
    r_{k,n,m} = \sqrt{r_k^2 - 2r_k(nS + m)d\sin{\theta_k} + \left((nS + m)d\right)^2},
    \label{eq:r_nm}
\end{equation}
where $r_k$ denotes its distance from the center of the antenna array. \textcolor{black}{In the sequel, for the sake of notational simplicity, we omit the dependency on $k$ in $r_{k,m,n}$ and $\theta_{k,m,n}$ unless specifically required to avoid ambiguity.}

Based on this formulation and assuming that the \ac{LoS} component dominates the channel vector response \cite{Zhou2015}, it is possible to write the channel model to accurately characterize the \textcolor{black}{complex-valued signal across the array elements} for every $k$-th user as
\begin{equation}
    \mathbf{h}(r,\theta) = \frac{\sqrt{\beta_0}}{r} \mathbf{a}(r,\theta),
    \label{eq:channel}
\end{equation}
%
%
\noindent where $\beta_0$ denotes the reference channel power gain at the distance of 1 meter (m), and the \ac{ARV} \textcolor{black}{
$\mathbf{a} \in \mathbb{C}^{N M\times 1}$ for every $k$-th user} can be written as
%
%
\textcolor{black}{
\begin{equation}
    \mathbf{a}(r,\theta) = \left[\frac{r}{ r_{n,m}}e^{-j \tfrac{2 \pi}{\lambda} r_{n,m}} \right]_{n \in \mathcal{N}, m \in \mathcal{M}}. 
\end{equation}
}

We also consider that the \textcolor{black}{amplitude variations over all array elements} can be neglected if $r \geq 1.2\Delta_{NM}$ \cite{Bjornson2021}, where $\Delta_{NM}= [(N-1)S + (M-1)]d$ is the total physical size of the modular \ac{gMIMO} antenna array. Then, the position of the $k$-th user in \eqref{eq:r_nm} can be simplified to \textcolor{black}{ $\mathbf{a}(r,\theta) \approx [e^{-j \frac{2 \pi}{\lambda} r_{n,m}} ]_{n \in \mathcal{N}, m \in \mathcal{M}}$}. Based on this observation and recalling that the boundary between \ac{FF} and \ac{NF} regions of antenna arrays can be pointed out based on the classical Rayleigh distance, i.e., $r_{\rm Ray} = 2\Delta_{NM}^2/\lambda$, it is possible to express the \ac{ARV} $\mathbf{a}(r,\theta)$ as a function of $r$ based on the \ac{PW} and the \ac{SW} models for modular \ac{gMIMO} antenna arrays.

\subsection{Array response vector for modular arrays}

In this work, we are interested in arrays in which the module sizes are small compared to the dimensions of the entire array, i.e., \textcolor{black}{ $2\Delta_{M}^2 \leq \lambda \ r < 2\Delta_{NM}^2$}, where $\Delta_{M} = (M-1)d$ is the physical size of each module. In such a case, for a practical range of usage, the users are located within the \ac{NF} region of the whole array, but within the \ac{FF} region of each array module. In other words, there is no significant distance difference between users and adjacent antennas within a particular module. In contrast, the distance and \ac{AoA} vary across different modules. In this way, the distance from the $n$-th module to a user located at a distance $r$ and at an angle $\theta$ reads as
\begin{equation}
    r_{n} = \sqrt{r^2 - 2 r n S d \sin{\theta} + (n S d)^2} \quad \forall n \in \mathcal{N}\textcolor{black}{.}
    \label{eq:distanceNmodule}
\end{equation}
Moreover, since $\theta_n$ represents the angle of the user to the center of each $n$-th module, it is defined \textcolor{black}{by}
\begin{equation}
    \sin{\theta_{n}} = \frac{r \sin{\theta} - n S d}{r_{n}} \quad \forall n \in \mathcal{N}.
    \label{eqtheta}
\end{equation}
Based on this geometrical model, the \ac{ARV} can be compactly expressed as
\begin{align}
    \B{a}(r,\theta)=\left(\B{q}(r,\theta)\otimes\B{1}_M\right)\odot \textcolor{black}{\B{u}(\theta)},
    \label{eq:ARV_General}
\end{align}
where $\B{q}(r,\theta)\in \mathbb{C}^{N \times 1}$ comprises the near-field delays as
\begin{align*}
    \B{q}(r,\theta) =  \left[e^{-j \frac{2 \pi}{\lambda} r_{1}},\ldots,e^{-j \frac{2 \pi}{\lambda} r_{N}}\right]^T,
\end{align*}
$\B{1}_M$ is a column vector containing $M$ ones, and \textcolor{black}{$\mathbf{u}(\theta) = [ \B{b}^T(\theta_1),\ldots,\B{b}^T(\theta_N)]^T \in \mathbb{C}^{NM \times 1} $} with $\mathbf{b}(\theta_n) = [ e^{j \tfrac{2\pi}{\lambda} m d \sin (\theta_n)}]_{m \in \mathcal{M}} \in \mathbb{C}^{M \times 1}$ for the $n$-th module represent the far-field effects within each of the modules\footnote{\textcolor{black}{The notation $\mathbf{u}(\theta)$ is used for the sake of simplicity in the sequel; however, note that $\theta_n$ depends on $r$ through \eqref{eqtheta}.}}. In the scenarios where the users are located on the \ac{FF} region of the whole array, i.e. $r \geq r_{\rm Ray}$, the first-order Taylor expansion can be used to approximate the distance expression in \eqref{eq:distanceNmodule} as a linear function of the antenna index, so that $r_{n} \approx r -n S d\sin (\theta)$ and we get
\begin{align}
    \B{q}(r,\theta) =   e^{-j \frac{2 \pi}{\lambda} r} \left[e^{-j \frac{2 \pi}{\lambda}  S d \sin (\theta)\frac{N-1}{2}},\ldots,e^{j \frac{2 \pi}{\lambda} S d \sin (\theta)\frac{N-1}{2}}\right]^T.    
    \label{eq:PW_delays}
\end{align}

Moreover, in this case, the angle is common for all the modules, i.e. $\theta_n\approx\theta$, such that $\textcolor{black}{\B{u}(\theta)}=\B{1}_N\otimes \B{b}(\theta)$, and the array response vector simplifies to $\B{a}(r,\theta)=\B{q}(r,\theta)\otimes\B{b}(\theta)$. 

\subsection{Problem formulation}
%

In this work, \textcolor{black}{we focus on the downlink of a modular gMIMO system, where the \ac{BS} communicates with $K$ single-antenna users. We denote the data symbol intended for user $k$ as $s_k$,  with $k\in\mathcal{K} = \{1, \ldots, K \}$, and consider zero-mean data symbols $s_k\in \mathcal{N}_{\mathbb{C}}(0,1)$}. 
To enable the system's multi-user capabilities, the symbols are linearly processed with the unit-norm precoding vectors $\mathbf{p}_k\in \mathbb{C}^{NM\times 1}$. Thus, the transmitted signal is $\mathbf{x}=\sum_{k\in\mathcal{K}}\sqrt{p_{k}}\mathbf{p}_{k}s_k$, where $p_k$ is the power allocated to the user, and we consider the total transmit power constraint $\sum_{k\in\mathcal{K}}p_{k}\leq P_{\text{TX}}$. The transmitted signal then inputs the wireless channel represented by the channel vector $\mathbf{h}(r_k,\theta_k)$ from \eqref{eq:channel} leading to the received signal for the $k$-th user
\begin{align}
    y_k & = \mathbf{h}^H(r_k,\theta_k)\mathbf{x}+z_k=\mathbf{h}^H(r_k,\theta_k)\sum\nolimits_{k\in\mathcal{K}}\sqrt{p_{k}}\mathbf{p}_{k}s_k+z_k\notag\\
    & = \mathbf{h}^H(r_k,\theta_k)\sqrt{p_{k}}\mathbf{p}_{k}s_k+\mathbf{h}^H(r_k,\theta_k)\sum\nolimits_{i\neq k}\sqrt{p_{i}}\mathbf{p}_{i}s_i+z_k\label{eq:sigr}
\end{align}
where $z_k\in \mathcal{N}_{\mathbb{C}}(0,\sigma^2)$ is the noise. Note that the first term in \eqref{eq:sigr} is the desired signal whereas the second one collects the \ac{IUI}. When the number of users grows large compared to $NM$, the achievable \ac{SE} is limited by the \ac{IUI}, as can be noticed from
\begin{align}\label{eq:rate}
R_k=
     \text{log}_2 \left(1+ \frac{p_k|\mathbf{h}^H(r_k,\theta_k)\mathbf{p}_k|^2}{\sum\nolimits_{i\in \mathcal{S},i\neq k}p_{i}|\mathbf{h}^H(r_k,\theta_k)\mathbf{p}_{i}|^2+\sigma^2}
    \right),
\end{align}
where $\mathcal{S} \subseteq \mathcal{K}$  is the set of served users, i.e., those with ${p}_{i}>0$. Two critical questions should then be addressed to solve this problem. First, it is of key importance to accurately characterize the \ac{IUI} according to the spatial positions of the users and then, according to such information, determine the set of users that are promising candidates to be served by the BS in each particular time-frequency resource block. The optimization problem is then formulated to maximize the achievable sum-\ac{SE} as
\begin{equation}
    \max_{\{\mathbf{p}_k\}_k\in\mathcal{S}}\sum_{k\in \mathcal{S}}R_k\quad\text{s.t.}\quad \sum_{k\in \mathcal{S}}p_k\leq P_{\text{TX}}.
    \label{eq:problem}
\end{equation}
\textcolor{black}{Observe that we aim at designing a set $\mathcal{S}$, and the corresponding precoding vectors, for a given time-frequency resource block and user set $\mathcal{K}$. This is compatible with a broad variety of higher-level resource allocation policies that define the resource blocks and candidate user set $\mathcal{K}$.}

In the next section, we analyze the dependence of \ac{IUI} on the user positions to provide a first approach to solving the combinatorial problem in \eqref{eq:problem}.

\section{Interference characterization in MAs}
\label{sec:interferenceModular}

In this section, we study the consequences of employing a modular array architecture in terms of inter-user interference. \textcolor{black}{Let us begin by considering a baseline scenario where the BS is only allowed to use the \ac{MRT} precoder scheme to communicate, where all users are served and allocated the same power budget. Then, \eqref{eq:rate} reads as
\begin{align}
    R_k&=\text{log}_2 \left(1+ \frac{\frac{1}{MN}|\mathbf{h}^H(r_k,\theta_k)\mathbf{h}(r_k,\theta_k)|^2}{\frac{1}{MN}\sum_{i\in \mathcal{S},i\neq k}|\mathbf{h}^H(r_k,\theta_k)\mathbf{h}(r_i,\theta_i)|^2+\sigma^2}
    \right)\notag\\
     &=\text{log}_2 \left(1+ \frac{I^2(k,k)}{\sum\nolimits_{i\in \mathcal{S},i\neq k}I^2(i,k)+\sigma^2\frac{r^4}{\beta_0^2}}
    \right),
    \label{eq:rateMRT}
\end{align}
where $I(j,k)=\frac{1}{MN}|\B{a}^H(r_j,\theta_j)\B{a}(r_k,\theta_k)|$ coincides with the \ac{IUI} up to some scaling factor. Consequently, we evaluate the interference $I(j,k)$} caused by a user $k$ located at a distance $r_k$ and angle $\theta_k$, to a different user $j$ in the position given by $r_j$ and $\theta_j$ from the general framework granted by the \ac{ARV} in \eqref{eq:ARV_General}, as shown in Fig. \ref{fig:XL_MIMO_SystemModel_2users_new}. \textcolor{black}{We can see from the denominator in \eqref{eq:rateMRT} that such} \ac{IUI} is determined by the similarity of \ac{ARV}s, which is analyzed in the sequel\footnote{\textcolor{black}{Since the ZF vector is the projection of the MRT precoding vector into the nullspace of the IUI vector subspace, this will be later used in the user selection and precoding design, where specific precoders based on ZF will be implemented over the candidate users.}}. In particular, we are interested in the evaluation of the product  
\begin{align}
    I(j,k)&=\frac{1}{MN}\left|\left(\left(\B{q}^H(r_j,\theta_j)\otimes\B{1}_M^T\right)\odot \textcolor{black}{\B{u}^H(\theta_j)}\right)\notag \right.\\
    & \left. \times\left(\left(\B{q}(r_k,\theta_k)\otimes\B{1}_M\right)\odot \textcolor{black}{\B{u}(\theta_k)}\right)\right|\nonumber\\
    &=\frac{1}{MN}\left|\Tr\left\lbrace\left(\B{q}^H(r_j,\theta_j)\otimes\B{1}_M^T\right)\diag(\textcolor{black}{\B{u}^H(\theta_j)})\right.\notag \right. \\
    & \left.\quad\times\left.\diag(\textcolor{black}{\B{u}(\theta_k)})\left(\B{q}(r_k,\theta_k)\otimes\B{1}_M\right) \right\rbrace\right|\nonumber\\
    &\leq \frac{1}{MN} \left|\Tr\left\lbrace\left(\B{q}^H(r_j,\theta_j)\otimes\B{1}_M^T\right)\left(\B{q}(r_k,\theta_k)\otimes\B{1}_M\right)\right\rbrace \right|\nonumber \\
    & \times \left|\Tr\left\lbrace\diag(\textcolor{black}{\B{u}^H(\theta_j)})\diag(\textcolor{black}{\B{u}(\theta_k)})\right\rbrace\right|\nonumber\\
    &=\frac{1}{N}\Big|\B{q}^H(r_j,\theta_j)\B{q}(r_k,\theta_k)\Big|\Big|\sum\nolimits_{n=1}^N\B{b}^H(\theta_{j,n})\B{b}(\theta_{k,n})\Big|.
    \label{eq:IUIfirstBound}
\end{align}
The latter expression can be further bounded by taking the summation out of the absolute value, as follows
\begin{align}
    I(j,k)
        \leq \frac{1}{N}\Big|\B{q}^H(r_j,\theta_j)\B{q}(r_k,\theta_k)\Big|\sum\nolimits_{n=1}^N \Big|\B{b}^H(\theta_{j,n})\B{b}(\theta_{k,n})\Big|.
    \label{eq:generalBound}
\end{align}
Observe that when the inner products present identical sign for all $n$, \eqref{eq:generalBound} is equal to the bound in \eqref{eq:IUIfirstBound}. This situation also arises when the angles for the different modules are the same, i.e., $\theta_{n}=\theta$. In addition, a simplified version of $I(j,k)$ in \eqref{eq:generalBound} is useful when both users lie on the FF region, or under the assumption $\theta_n\approx\theta$. 
For such a case we have that
\begin{align}
   I(j,k)&=\frac{1}{MN}|\B{q}^H(r_j,\theta_j)\B{q}(r_k,\theta_k) \otimes \B{b}^H(\theta_j)\B{b}(\textcolor{black}{\theta_k})|\notag\\
    &=\frac{1}{MN}|\B{q}^H(r_j,\theta_j)\B{q}(r_k,\theta_k) \B{b}^H(\theta_j)\B{b}(\theta_k)|\nonumber\\
    &= \frac{1}{MN}\Big|\B{q}^H(r_j,\theta_j)\B{q}(r_k,\theta_k)\Big|\Big|\B{b}^H(\theta_j)\B{b}(\theta_k)\Big|.
    \label{eq:comAngBound}
 \end{align}
In moving from \eqref{eq:generalBound} to \eqref{eq:comAngBound} we split the interference into two parts: \textit{i)} one corresponding to the module separation, and \textit{ii)} a second factor corresponding to the modules themselves. Thus, both expressions present products which depend on vectors with unit-norm elements, but very different sizes. Indeed, the scenario of interest is such that where $N\gg M$, in such a way that it makes it possible to consider the individual modules achieving the conditions of far-field propagation. In contrast, the distance between the modules incorporates the effects of the near-field propagation environment. Accordingly, the interference mitigation capabilities within each module are limited compared to that of the modular array. This is because the vector space of dimension $N$ allows much more flexibility to handle interference than the one of dimension $M$ corresponding to the modules. Therefore, the following subsections allow us to draw relevant conclusions for the approximation in \eqref{eq:generalBound} and the exact reformulation of \eqref{eq:comAngBound}.

\subsection{Inter-module interference}
We now study the first factor in \eqref{eq:generalBound} and \eqref{eq:comAngBound}, $|\B{q}^H(r_j,\theta_j)\B{q}(r_k,\theta_k)|$ to evaluate the impact of user location on the \ac{IUI}, i.e., 
\begin{equation}
     \frac{1}{N}|\B{q}^H(r_j,\theta_j)\B{q}(r_k,\theta_k)|=\frac{1}{N} \left|\sum_{n\in\mathcal{N}}e^{j\frac{2\pi}{\lambda} (r_{k, n}-r_{j,n})} \right|,
     \label{eq:interModuleInterf}
\end{equation}
where $r_{k, n}$ and $r_{j, n}$ represent the distance from the $n$-th module to user $k$ and $j$, respectively. A closed-form expression for this equation is difficult to achieve due to the complicated definition of the module distances in \eqref{eq:distanceNmodule}. Nevertheless, different approximations can be employed to arrive at analytical forms for each possible scenario, as formally stated in the following proposition.

\begin{prop}\label{propo1}
The effect of module separation on IUI can be approximated as
\begin{align}
\frac{1}{N}|\B{q}^H(r_j,\theta_j)\B{q}(r_k,\theta_k)|
\approx\frac{1}{N\sqrt{2a}}\left|F(t^{+})-F(t^{-})\right|
\label{eq:integralApprox},
\end{align}
with parameters $a>0$ and $b$ dependent on user locations, $t^{-}=-\frac{a}{\sqrt{2}}+\frac{b}{\sqrt{2a}}$ and $t^{+}=\frac{a}{\sqrt{2}}+\frac{b}{\sqrt{2a}}$, and $F(\cdot)$ being the Fresnel function $F(\tau)=\int_0^\tau[\cos\big(\frac{\pi}{2}t^2\big)+j\sin\big(\frac{\pi}{2}t^2\big)]dt$ \cite{AbSt65}. 
\end{prop}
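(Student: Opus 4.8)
The plan is to convert the discrete sum in \eqref{eq:interModuleInterf} into an integral and recognize the result as a Fresnel integral. First I would treat the module index $n$ as a continuous variable ranging over $[-(N-1)/2,(N-1)/2]$, which is justified when $N$ is large and the summand $e^{j\frac{2\pi}{\lambda}(r_{k,n}-r_{j,n})}$ varies slowly with $n$; this replaces $\frac{1}{N}\sum_{n\in\mathcal{N}}$ by $\frac{1}{N}\int_{-(N-1)/2}^{(N-1)/2}(\cdot)\,dn$ (a midpoint/Euler–Maclaurin approximation). The exponent is $\phi(n) = \frac{2\pi}{\lambda}\bigl(r_{k,n}-r_{j,n}\bigr)$ with $r_{k,n}$ given by \eqref{eq:distanceNmodule}.

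Next I would expand each $r_{\ell,n}=\sqrt{r_\ell^2 - 2r_\ell nSd\sin\theta_\ell + (nSd)^2}$ to second order in the (small) quantity $nSd/r_\ell$, giving $r_{\ell,n}\approx r_\ell - nSd\sin\theta_\ell + \frac{(nSd)^2}{2r_\ell}\cos^2\theta_\ell$. Subtracting the expansions for users $k$ and $j$ yields a quadratic polynomial in $n$,
\[
\phi(n) \approx \alpha + \beta_0' n + \gamma n^2,
\]
where the coefficients depend on $r_k,r_j,\theta_k,\theta_j$ and on $Sd/\lambda$; the quadratic coefficient is $\gamma \propto \frac{(Sd)^2}{\lambda}\bigl(\frac{\cos^2\theta_k}{r_k}-\frac{\cos^2\theta_j}{r_j}\bigr)$. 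Completing the square, $\phi(n) = \gamma\bigl(n + \tfrac{\beta_0'}{2\gamma}\bigr)^2 + \text{const}$, and substituting the linear change of variable $t = \sqrt{2\gamma/\pi}\,(n + \beta_0'/(2\gamma))$ (up to the normalization built into $F$) turns $\int e^{j\phi(n)}dn$ into $\sqrt{\pi/(2\gamma)}\,e^{j\,\text{const}}\int_{t^-}^{t^+}e^{j\frac{\pi}{2}t^2}dt = \sqrt{\pi/(2\gamma)}\,e^{j\,\text{const}}\bigl(F(t^+)-F(t^-)\bigr)$. Taking absolute values kills the unimodular prefactors $e^{j\alpha}$ and $e^{j\,\text{const}}$, leaving $\frac{1}{N}\bigl|\int e^{j\phi(n)}dn\bigr| = \frac{1}{N}\sqrt{\pi/(2\gamma)}\,|F(t^+)-F(t^-)|$. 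Matching this with the statement identifies the parameter called $a$ in the proposition with the (rescaled) half-length of the integration interval in the $t$-variable and $b$ with the shift $\beta_0'/(2\gamma)$ suitably normalized, so that $t^\pm = \pm a/\sqrt{2} + b/\sqrt{2a}$ and the prefactor becomes $1/(N\sqrt{2a})$; one then checks that $a>0$ under the scenario assumptions (which fix the sign of $\gamma$, or one absorbs a sign by symmetry of $|F(t^+)-F(t^-)|$ under $t\mapsto -t$).

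The main obstacle I expect is bookkeeping the algebra of the change of variables so that the endpoints land exactly in the stated form $t^\pm = \mp\frac{a}{\sqrt 2}+\frac{b}{\sqrt{2a}}$ and the amplitude is exactly $\frac{1}{N\sqrt{2a}}$ — in particular, correctly defining $a$ and $b$ in terms of $\gamma$, $\beta_0'$, and $N$, and verifying the factor-of-$\sqrt 2$ and $\pi/2$ normalizations that match the Fresnel convention used (the one with $\frac{\pi}{2}t^2$ in the integrand, i.e.\ the normalized Fresnel integral of \cite{AbSt65}). A secondary point requiring care is justifying the two approximations quantitatively: the sum-to-integral step (requires $|\phi'(n)|\ll 2\pi$ between consecutive $n$, i.e.\ the module spacing is not so large that the phase wraps many times per module — note this can actually fail and is precisely what produces grating-lobe behavior, so the approximation should be stated as valid away from those resonances), and the truncation of the Taylor expansion at second order (requires $\frac{(NSd)^3}{r_\ell^2\lambda}$ to be negligible, i.e.\ users beyond a Fresnel-type distance relative to the module-array aperture, consistent with the regime $2\Delta_M^2 \le \lambda r < 2\Delta_{NM}^2$ posited earlier). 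These conditions should be noted as the hypotheses implicitly carried by the word "approximated."
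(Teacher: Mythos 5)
Your proposal follows essentially the same route as the paper's Appendix~\ref{ap:Fresnel}: a second-order Taylor expansion of $r_{k,n}-r_{j,n}$ in the module index to obtain a quadratic phase, replacement of the sum over $n$ by a continuous integral (the paper normalizes via $x=n/N$ so that $a=N^2\bar a$, $b=N\bar b$), completion of the square, and a linear change of variables yielding the Fresnel integral, with the unimodular constants removed by the absolute value. Your added remarks on when the sum-to-integral and second-order-truncation steps are valid go slightly beyond what the paper states, but the core argument is the same.
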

\begin{proof}
	See Appendix~\ref{ap:Fresnel}.
\end{proof}
Some features of the Fresnel function are helpful to identify convenient ranges of values for $a$ and $b$. Specifically: \textit{i)} $F(0)=0$; \textit{ii)} $F(x)$ is an odd function which quickly increases with $x>0$; and \textit{iii)} $F(x)$ is upper bounded by $0.95$ and converges to $\frac{1}{\sqrt{2}}$ as $x$ becomes large. In the ensuing subsections, we explore the values of $a$ and $b$ for several practical situations to achieve insight regarding the influence of user locations over inter-user interference. This information will lead to later define geometry-based strategies for selecting users in a scheduling procedure.

\subsubsection{Interference in the NF region}
\label{sec:InterfNF}
When both users lie on the NF region, the second-order Taylor approach of the module distances leads to tractable expressions for the interference \cite{Sh62,CuDa22}. In particular, the parameters $a$ and $b$ of the resulting quadratic form are closely related to the geometrical features of the scenario, as follows
\begin{align}
    a&=\frac{1}{\lambda}N^2S^2d^2\left(\frac{\cos^2(\theta_k)}{r_k}-\frac{\cos^2(\theta_j)}{r_j}\right),\label{eq:paramInterfNFa}\\
    b&=-\frac{2}{\lambda}NS d(\sin(\theta_k)-\sin(\theta_j)).
    \label{eq:paramInterfNFb}
\end{align}
\textcolor{black}{As we separated the interference contribution into two terms in \eqref{eq:generalBound}, we first note that the parameters $a$ and $b$ are independent of the module size $M$. Observe from $a$ that the quadratic scaling factor depends on the physical size of the antenna array, including the number of modules, the separation among them, and the elements distance. This is then multiplied by the two quotients that depend on the user positions. In particular, by setting the quotient to a factor $\frac{\cos^2(\theta)}{r}=\phi$ we define an ellipse with its major axis starting at the BS and aligned with the x-axis \cite{CuDa22}. Thus, users placed along these two ellipses will obtain the same value for the parameter $a$. For the parameter $b$, we obtain a scaled version of the spatial angular distance $\sin(\theta_k)-\sin(\theta_j)$. As such, users located along a straight line originating from the BS at given angles will obtain the same values of $b$, regardless of their distances from the BS.}

When performing scheduling tasks, it is usually desirable to find sets of users with negligible or even zero \ac{IUI}. This can be achieved when \eqref{eq:integralApprox} is small. A reduced value for the numerator of \eqref{eq:integralApprox} can be achieved when the parameters $a$ and $b$ satisfy, simultaneously, the next conditions
\begin{align}
    b\approx a\sqrt{a},\qquad b\approx -a\sqrt{a},
    \label{eq:integralZeroCond}
\end{align}
so that $F(t^{+})-F(t^{-})\approx 0$. This situation is only met with equality when $a=b=0$. However, this situation would cause an indetermination in \eqref{eq:integralApprox} as $a$ also appears in the denominator. Also, recall that we assume $a>0$ in Appendix \ref{ap:Fresnel}, so we cannot achieve \eqref{eq:integralZeroCond} with an arbitrarily small precision. From a geometrical point of view, $a=b=0$ in \eqref{eq:paramInterfNFa} and \eqref{eq:paramInterfNFb} would imply the trivial case on which the positions of the two users are identical, that lacks practical interest. Alternatively, as the numerator of \eqref{eq:integralApprox} is bounded, the interference can be reduced by increasing the denominator
\begin{equation}
    \sqrt{2a}=\sqrt{\frac{2}{\lambda}}NSd\left(\frac{\cos^2(\theta_k)}{r_k}-\frac{\cos^2(\theta_j)}{r_j}\right)^{1/2}.
    \label{eq:NFinterference}
\end{equation}
Note that this happens when the user locations differ. For instance, selecting users such as $r_k\gg r_j$ might reduce the interference, but the channel for the furthest user will be poor due to path loss. Hence, it is interesting to select users located at distances in the same order of magnitude, $r_k\approx r_j$, but $\theta_k$ and $\theta_j$ such that the difference of cosines is as large as possible. Interestingly, the module separation $S$ increases the number of grating lobes but linearly increases the denominator of the interference function of \eqref{eq:integralApprox}, that is \eqref{eq:NFinterference}, then resulting in a trade-off. Also, a larger number of modules, i.e. increasing $N$, contributes to reducing the interference. From a geometrical perspective, we arrive at a similar conclusion, since for users located at similar distances to the BS, $r_k\approx r_j$, only the negative term in \eqref{eq:distanceNmodule} for $r_{k,n}$ and $r_{j,n}$ significantly changes from the module distances of the two interfering users.

To better illustrate our observation and provide further insight, we consider the scenario where a user is located at a distance $r_k$ and $\theta_k=0$ (i.e., located over the $x$-axis), with $r_k$ and $r_j$ being comparable. Thus, the denominator of the interference function of \eqref{eq:integralApprox}, $\sqrt{2a}$ in \eqref{eq:NFinterference}, can be approximated by
\begin{equation}
    \sqrt{2a}\approx\sqrt{\frac{2}{\lambda r_j}}NSd|\sin(\theta_j)|.
    \label{eq:sqrt2aapprox}
\end{equation}
As we desire to get large values for \eqref{eq:sqrt2aapprox} in order to reduce the interference, and under the aforementioned assumptions, we seek users with angular coordinate $\theta_j$ away from zero but, at the same time, a value $r_j$ as small as possible. 


\subsubsection{Interference in \ac{NF} and \ac{FF} regions}
\label{sec:InterfNFFF}
When users close to the \ac{BS} are not eligible to be scheduled for transmission, it is interesting to check the interference caused to those users in the region where the \ac{FF} condition holds. In this case, one of the \acp{ARV} obeys \eqref{eq:PW_delays}, and the difference of distances can be written as $r_{k,n}-r_{j,n}=r_{k,n}+nSd\sin(\theta_j)$. Using the second-order Taylor approach for the NF user as in Appendix \ref{ap:Fresnel}, we characterize the interference with the approximation  in \eqref{eq:integralApprox} by updating the values of $a$ and $b$ accordingly, that is,
\begin{align}
    a&=\frac{1}{\lambda}N^2S^2d^2\left(\frac{\cos^2(\theta_k)}{r_k}\right)\label{eq:paramInterfFFa},
\end{align}
whereas $b$ is that in \eqref{eq:paramInterfNFb}. 
If we consider the user in the NF region fixed, the condition in \eqref{eq:integralZeroCond} can be recasted to
\begin{equation*}
    \theta_j\approx\arcsin\left(\sin(\theta_k)\pm\frac{N^2S^2d^2}{2r\sqrt{r_k\lambda}}\cos^3({\theta_k})\right).
\end{equation*}
Since the assumption of a given user in the NF region keeps the denominator in \eqref{eq:integralApprox} fixed, feasible values of $\theta_j$ critically depend on $\theta_k$, as very small values on $\cos^3(\theta_k)$ are interesting to achieve a smaller amount of interference on users located in the \ac{FF} region. Due to the dependence on $r_k$, this effect is stronger when the user in the NF region is closer to the \ac{BS}.

If we consider the opposite scenario, that is, a given user in the FF and a low interference candidate in the NF counterpart, strong candidates to be scheduled are those users with a reduced value of $r_k$ and a large value for $\cos^2(\theta_k)$. In other words, users close to the broadside of the BS.  
\subsubsection{Interference in the FF region}
\label{sec:InterfFF}
When both users are located at a significant distance to the \ac{BS}, the vectors in \eqref{eq:PW_delays} accurately represent inter-module delays, and the product can be written using the well-known expression
\begin{equation}
     \frac{1}{N}|\B{q}^H(r_j,\theta_j)\B{q}(r_k,\theta_k)|=\left| \frac{\sin{\left(\tfrac{\pi}{2}(N S(\sin(\theta_k)) - \sin(\theta_j))\right)}}{N \sin{\left(\tfrac{\pi}{2}  S(\sin(\theta_k) - \sin(\theta_j))\right)}} \right|.
     \label{eq:FF_IUI}
\end{equation}
In contrast with the NF scenario in Sec. \ref{sec:InterfNF}, this interference only depends on the user angles $\theta_k$ and $\theta_j$. While in the NF scenario the module separation $S$ increases the denominator in the interference expression and hence increases \eqref{eq:NFinterference}, in the FF case $S$ denotes the number of grating lobes in the radiation pattern of the antenna array, thus reducing the angular directions experiencing low interference. Accordingly, we can expect stronger interference in the FF as the \acp{ARV} are independent of the distance, and there is no counterpart for balancing the effects of the grating lobes.

\subsection{Intra-module interference} 
Thus far, we have studied the first factor within the interference bound in \eqref{eq:generalBound}, associated with inter-module interference. We now study the second factor caused by the intra-module interaction. The general expression for this factor reads as  
\begin{align}
     &\frac{1}{M}|\B{b}^H(\theta_{j, n})\B{b}(\theta_{k,n})|\nonumber \\& = \frac{1}{M}\left|\sum_{m\in\mathcal{M}}e^{j \tfrac{2\pi}{\lambda} m d (\sin(\theta_{k, n}) - \sin(\theta_{j, n}))} \right| \nonumber \\
     & = \frac{1}{M}\left| \frac{\sin{\left(M\tfrac{\pi}{\lambda}d(\sin(\theta_{k,n}) - \sin(\theta_{j,n}))\right)}}{ \sin{\left(\tfrac{\pi}{\lambda}d  (\sin(\theta_{k,n}) - \sin(\theta_{j,n}))\right)}} \right|.
    \label{eq:interModuleInterf_2ndFactor_NFNF}
\end{align}
It is interesting to determine the geometrical region where the bound of \eqref{eq:generalBound} becomes tight. A similar region was defined in \cite{LiDoZhYoJiZh24}, but led to a very pessimistic bound that can be further enhanced. In particular, we characterize the region for which it becomes reasonable to consider that $\theta_{k,n}\approx\theta_k$, where $\theta_k$ is the angular position of the user and $\theta_{k,n}$  the angle of the user to the center of each $n$-th module. To determine such a region, we use a criterion similar to that employed to obtain the half-power beamwidth, now with a $5\%$ power reduction for increased accuracy. This is formally stated in the following Proposition.

\begin{prop}{(Common angle approximation)}\label{propo2}
\\The distance $r$ for which $\frac{1}{M}|\B{b}^H(\theta_k)\B{b}(\theta_{k,n})|\approx 0.95$ is given by
\begin{equation}
    r\geq\frac{(N-1)Sd}{2\epsilon},
    \label{eq:rApprox}
\end{equation}
with $\epsilon=-\frac{2}{M}0.18$.
\end{prop}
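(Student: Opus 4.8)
The plan is to reduce the claim to a one‑dimensional statement about the Dirichlet kernel and then combine it with a geometric estimate of the angular mismatch between a user and the centers of the individual modules. I set $\delta=\sin(\theta_{k,n})-\sin(\theta_k)$. Recalling the half‑wavelength spacing $d=\lambda/2$, the closed form \eqref{eq:interModuleInterf_2ndFactor_NFNF} reads
\begin{equation*}
\frac{1}{M}\big|\B{b}^{H}(\theta_k)\B{b}(\theta_{k,n})\big|=\left|\frac{\sin\!\big(\tfrac{M\pi}{2}\delta\big)}{M\sin\!\big(\tfrac{\pi}{2}\delta\big)}\right|.
\end{equation*}
In the regime we want to certify, $|\delta|$ is small, so I would linearise the denominator, $\sin(\tfrac{\pi}{2}\delta)\approx\tfrac{\pi}{2}\delta$, leaving the factor $\big| \sin(\tfrac{M\pi}{2}\delta) / (\tfrac{M\pi}{2}\delta) \big|$, an even function of $\delta$ that decreases monotonically from $1$ on a neighbourhood of $\delta=0$. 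Hence ``the factor is $\ge 0.95$'' is equivalent to ``$|\delta|\le\delta^{\star}$'' for a single threshold $\delta^{\star}$.

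The threshold $\delta^{\star}$ is obtained by solving $\sin(u)/u=0.95$ — e.g. via the Taylor surrogate $1-u^{2}/6=0.95$, i.e. $u^{\star}=\sqrt{0.3}\approx 0.55$ — and setting $u^{\star}=\tfrac{M\pi}{2}\delta^{\star}$, which gives $\delta^{\star}=\tfrac{2u^{\star}}{M\pi}\approx\tfrac{2}{M}\,0.18$. This threshold is exactly $|\epsilon|$ with $\epsilon=-\tfrac{2}{M}\,0.18$; the negative sign simply records that, by \eqref{eqtheta}, $\delta=\sin(\theta_{k,n})-\sin(\theta_k)<0$ for a positively‑indexed module, so that it is $-\delta\le|\epsilon|$ that is being imposed.

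It then remains to turn $|\delta|\le\delta^{\star}$ into a bound on $r$. Combining \eqref{eqtheta} and \eqref{eq:distanceNmodule},
\begin{equation*}
\delta=\frac{(r-r_{k,n})\sin(\theta_k)-nSd}{r_{k,n}},
\end{equation*}
and expanding to first order for $r$ large compared with the module offset $nSd$ (dropping the $(nSd)^{2}$ term inside $r_{k,n}$ and keeping the leading contribution) yields $\delta\approx-\tfrac{nSd}{r}\cos^{2}(\theta_k)$. Consequently $|\delta|$ is largest for the outermost module, $|n|=(N-1)/2$, and for broadside, $\theta_k=0$ (where $\cos^{2}\theta_k=1$), so that $\max_{n,\theta_k}|\delta|\approx\tfrac{(N-1)Sd}{2r}$. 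Imposing $\tfrac{(N-1)Sd}{2r}\le\delta^{\star}=|\epsilon|$ and solving for $r$ produces the stated bound \eqref{eq:rApprox}.

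The routine part is the transcendental equation $\sin(u)/u=0.95$, which only fixes the numerical constant $0.18$. The step I expect to be most delicate is the geometric one: I must justify that, in the scenario of interest $2\Delta_{M}^{2}\le\lambda r<2\Delta_{NM}^{2}$, neglecting the $(nSd)^{2}/r^{2}$ and cross terms in the expansion of $\sin(\theta_{k,n})-\sin(\theta_k)$ is indeed harmless, and verify that broadside together with the outermost module is the worst case — which follows because the leading‑order mismatch scales linearly in $|n|$ and carries the factor $\cos^{2}\theta_k\le 1$.
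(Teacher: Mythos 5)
Your proposal is correct and follows essentially the same route as the paper's Appendix B: reduce the inner product to the Dirichlet kernel in the offset $\epsilon=\sin(\theta_{k,n})-\sin(\theta_k)$, linearise the denominator, solve $\sin(u)/u=0.95$ to fix the constant $0.18$, and then bound the geometric offset by $\tfrac{(N-1)Sd}{2r}$. The only (immaterial) difference is in that last step, where you use a first-order expansion $\delta\approx-\tfrac{nSd}{r}\cos^{2}\theta_k$ maximised over $n$ and $\theta_k$, whereas the paper bounds the exact quotient $\tfrac{-nSd-\Delta r_n\sin\theta}{r+\Delta r_n}$ term by term under the assumption $\sin\theta>0$, $n<0$ --- both yield the same worst-case value and hence the same threshold on $r$.
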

\begin{proof}
	See Appendix~\ref{ap:Dirichletkernel}.
\end{proof}
Note the dependency in Proposition \ref{propo2} on the number of elements per module $M$, which linearly increases the minimum distance, as the angular resolution also scales linearly with this parameter. The accuracy of this approximation is illustrated numerically in Fig. \ref{fig:Approx}, where the conservative approximation for the module $n=\frac{(N-1)}{2}$ is shown.

\begin{figure}[t]
	\centering
	\includegraphics[width=.9\columnwidth]{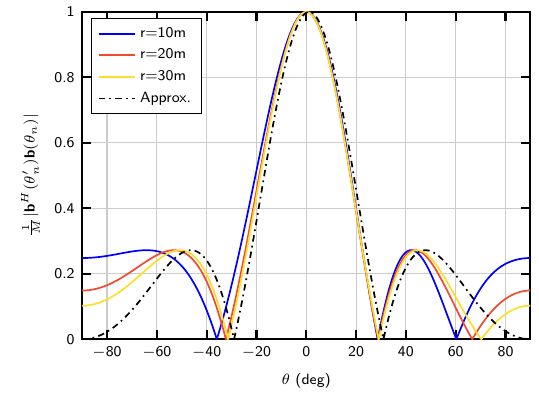}
	\caption{Worst case scenario for the common angle approximation. Parameter values are $N=32$, $M=4$,  $S=13$, $f_c=15$GHz. The proposed bound is $22.3734$ m for the considered setup. } 
 \label{fig:Approx}
\end{figure}

According to the previous discussion, the intra-module factor has to be characterized with respect to the bound \eqref{eq:rApprox}. In particular, when the distances for both users satisfy \eqref{eq:rApprox} we have that equation \eqref{eq:IUIfirstBound} (and  \eqref{eq:generalBound} as a consequence) reduces to \eqref{eq:comAngBound}, where the two factors contributing to the interference are decoupled and can be analyzed independently.

\section{User selection and precoding design}
\label{Sec:US}
The comprehensive characterization of interference patterns in gMIMO deployments based on \acp{MA} provides a solid foundation for the design of joint user selection and precoding schemes. Based on these features, two user selection algorithms are designed and described next, together with an algorithmic description for each of them.

\subsection{Algorithm 1: RSS.}
The first algorithm, referred to as \ac{RSS}, leverages the insights presented in Sec. \ref{sec:interferenceModular}, as user location in the area covered by the \ac{BS} is a fundamental factor in the eligibility for such a user. Accordingly, the \ac{RSS} method starts by selecting a rectangular-shaped area to discriminate the potential users, where $l$ sets the limit on both the horizontal and vertical axes, making an overall search area of $2l^2$. Thanks to this separation, it is possible to avoid an extensive search containing all users. As we will later see, such a restriction notably reduces the complexity with a minor performance degradation. Once the search area is determined, the user with a minimum horizontal coordinate is selected. This is in coherence with the analysis of the interference in the NF region in Sec. \ref{sec:interferenceModular}, \textcolor{black}{which} promotes users with lower horizontal coordinates with the aim of reducing \ac{IUI}. To check whether a given user stands out as a low \ac{IUI} candidate, we use the following metric
\begin{equation}
    \gamma_{\text{eq},k}=\|\B{h}_{\rm k}\|_2^2/\left(\B{h}_{\rm k}^H\big(\mathbf{I}-\sum\nolimits_{i=1}^n{\B{f}}^{(i)}{\B{f}}^{H,(i)}\big)\mathbf{h}_{\rm k}\right),
\end{equation}
where ${\B{f}}^{(i)}$ denotes a tentative precoding vector employed during the determination of the set of active users $\mathcal{S}$. In particular, at the iteration $n$ of the algorithm, the ZF precoder is computed for the selected user $k$, as follows 
\begin{align}
	\B{f}^{(n)}=\alpha_k\big(\mathbf{I}-\B{H}_{\bar{k}}(\B{H}_{\bar{k}}^H\B{H}_{\bar{k}})^{-1}\B{H}_{\bar{k}}^H\big)\B{h}_{\rm k},
	\label{eq:zf_precoders}
\end{align}
where $\B{H}_{\bar{k}}=\{\B{h}_{\rm i}\}_{i\in\mathcal{S},i\neq k}$ and $\alpha_k$ guarantees that $\B{f}^{(n)}$ is unit-norm.

The ratio $\gamma_{\text{eq},k}$ provides information regarding the amount of \ac{IUI} suffered by user $k$, with a denominator that penalizes users whose channel vectors exhibit high correlation with those already marked as active. In particular, the user is included in the set of active users $\mathcal{S}$ if $\gamma_{\text{eq},k}$ is above a given threshold $\mu$. This value can be adapted depending on the scenario, to serve users by adjusting the tolerable amount of interference. The algorithm then continues selecting users from the rectangular area earlier determined, until there are no more eligible users. Next, the rectangular area is increased and the selection procedure starts again. The method stops when the stopping criterion is met. When the procedure for joint user selection and precoding ends, the set $\mathcal{S}$ is fully determined and the actual precoders in \eqref{eq:sigr} are found as 
\begin{align}
	\B{p}_k=\alpha_k\big(\mathbf{I}-\B{H}_{\bar{k}}(\B{H}_{\bar{k}}^H\B{H}_{\bar{k}})^{-1}\B{H}_{\bar{k}}^H\big)\B{h}_{\rm k}.
\end{align}
As a final step, the power allocation coefficients $p_k$ are computed using the waterfilling procedure.

\begin{algorithm}[t]
	\caption{Rectangular-Search Scheduler (RSS)}\label{alg:RScheduler}
	\begin{algorithmic}[1]
	\small
		\STATE  $n \gets 0$, $\mathcal{S}^{(0)}\gets\emptyset$, $\mathcal{K}^{(0)}\gets\{1,\ldots,K\}$,  $l \gets$ initialization 
		\REPEAT
        \STATE $\mathcal{R}\gets \{i\in\mathcal{K}^{(n)}|x_i <l,|y_i| <l\}$
		\REPEAT
		\STATE $k\gets \min_{i\in\mathcal{R}}x_{i}$
        \STATE $\mathcal{R}\gets\mathcal{R}\setminus\{k\}$
		\STATE $\gamma_{\text{eq},k}\gets\|\B{h}_{\rm k}\|_2^2/(\B{h}_{\rm k}^H(\mathbf{I}-\sum\nolimits_{i=1}^n{\B{f}}^{(i)}{\B{f}}^{H,(i)})\mathbf{h}_{\rm k})$
        \IF{$\gamma_{\text{eq},k}\geq \mu$}
        \STATE  $n \gets n+1$
        \STATE $\mathcal{S}^{(n)}\gets\mathcal{S}^{(n-1)}\cup\{k\}$
		\STATE $\B{f}^{(n)}\gets$ Compute ZF precoder for user $k$
		\STATE $\mathcal{K}^{(n)}\gets\mathcal{K}^{(n-1)}\setminus\{k\}$
        \ENDIF
		\UNTIL $\mathcal{R}=\emptyset$
		\STATE $l \gets l +1$
		\UNTIL{$\mathcal{K}^{(n)}=\emptyset$ or stopping criterion}
	\end{algorithmic}
\end{algorithm}

\subsection{Algorithm 2: FLS.}
The method proposed in Alg. \ref{alg:RScheduler} uses a rectangular-shaped area, which represents a trade-off between the \ac{IUI} due to the modular array beam pattern and the user distance $r_k$. According to \eqref{eq:channel}, this distance is inversely proportional to the achievable SINR of the user, and determines the attainable performance. 

In order to reduce complexity, \textcolor{black}{an alternative} version of Alg. \ref{alg:RScheduler} \textcolor{black}{based on the insights of \eqref{eq:sqrt2aapprox}} and referred to as \ac{FLS} is presented in Alg. \ref{alg:FLScheduler}, where only the amount of \ac{IUI} is considered in the search of candidate users. Hence, this simplification avoids defining a rectangular area and directly evaluates users based on their position over the horizontal axis.  Specifically, for users located far away in the vertical axis, FLS will prioritize serving such users over other candidates with larger equivalent channel gains, which are closer to the \ac{BS}. \textcolor{black}{Since the set of candidate users will differ from that considered by RSS, its performance will depend on the scenario under consideration, and in the spatial distribution of users.}

\begin{algorithm}[t]
	\caption{Front Line Scheduler (FLS)}\label{alg:FLScheduler}
	\begin{algorithmic}[1]
	\small
		\STATE  $n \gets 0$, $\mathcal{S}^{(0)}\gets\emptyset$, $\mathcal{K}^{(0)}\gets\{1,\ldots,K\}$  
		\REPEAT
       	\STATE $k\gets \min_{i\in\mathcal{K}^{(n)}}x_{i}$
       	\STATE $\gamma_{\text{eq},k}\gets\|\B{h}_{\rm k}\|_2^2/(\B{h}_{\rm k}^H(\mathbf{I}-\sum\nolimits_{i=1}^n{\B{f}}^{(i)}{\B{f}}^{H,(i)})\mathbf{h}_{\rm k})$
        \IF{$\gamma_{\text{eq},k}\geq \mu$}
        \STATE  $n \gets n+1$
        \STATE $\mathcal{S}^{(n)}\gets\mathcal{S}^{(n-1)}\cup\{k\}$
		\STATE $\B{f}^{(n)}\gets$ Compute ZF precoder for user $k$
		\STATE $\mathcal{K}^{(n)}\gets\mathcal{K}^{(n-1)}\setminus\{k\}$
        \ENDIF
		\UNTIL{$\mathcal{K}^{(n)}=\emptyset$ or stopping criterion}
	\end{algorithmic}
\end{algorithm}

\subsection{A note on complexity.}
The computational cost of \eqref{eq:zf_precoders} is high, especially due to the computation of the matrix inversion and the matrix products within the inverse operation, with complexity costs in the order of $\mathcal{O}(|\mathcal{S}|^3)$ and $\mathcal{O}(|\mathcal{S}|^2MN)$, respectively, where $|\mathcal{S}|$ is the number of served users \cite{KaMuBjDe14}. Hence, these calculations become the performance bottleneck when a large number of users is checked in the selection procedure. To reduce such complexity, it is possible to approximate the computation of the inverse as in \cite{KaMuBjDe14,BeCaShTu19}, which entails a performance loss inherent to the approximation. In this work, we propose an alternative \textit{exact} procedure that takes advantage of the iterative structure of the Gram matrix $\B{H}_{\bar{k}}^H\B{H}_{\bar{k}}$. This novel procedure, which is detailed in Appendix \ref{ap:Schur}, allows to obtain the exact matrix inversion with a computational cost of about $\mathcal{O}(|\mathcal{S}|MN)$, which is linear on both the number of users and antennas.

\section{Numerical Results}
\label{Sec:NR}
In this section, we conduct numerical experiments to assess the performance benefits of the proposed methods. We consider a carrier frequency $f_c=15$ GHz in the FR3 band, due to its promising balance between bandwidth availability and coverage \cite{Emil2024}. Unless otherwise stated, we consider a scenario where the BS is equipped with a modular array with $N=32$ modules of $M=4$ antennas each with a separation parameter $S=13$, attempting to serve $K=300$ users, and the reference channel power gain is $\beta_0=0$ dB. 

According to the spatial features of the test environment defined in \cite{ITU} for evaluating 5G radio technologies in an urban environment with high user density, we consider a first scenario where users are uniformly distributed inside an area for which $r_k\in[10,60]$ m and $\theta_k\in[-60^\circ,60^\circ]$, with SNR values ranging from $0$ to $25$ dB. \textcolor{black}{The stopping criterion in Alg. \ref{alg:RScheduler} and Alg. \ref{alg:FLScheduler} is a reduction in the sum-\ac{SE}.} For benchmarking purposes, we compare our results with some recent methods, namely: (\textit{i}) \ac{DBS} in \cite{GonzalezComa2021}; (\textit{ii}) the greedy approach in \cite{LiDoZhYoJiZh24}, labeled as Greedy in the figures; and (\textit{iii}) the classical \ac{SUS} scheme \cite{GuUtDi09}. \textcolor{black}{The collocated configuration with $NM$ antennas, i.e. $S=M=1$, is analyzed as a reference. Note that, under this configuration, the antenna array aperture reduces to $(NM-1)d$.} It is worth mentioning that all these benchmark methods avoid the combinatorial nature of the problem in \eqref{eq:problem} in different ways: on the one hand, \ac{SUS} in \cite{GuUtDi09} explores all the available users to find the best-fitting candidate at each step. Conversely, the greedy method in \cite{LiDoZhYoJiZh24} (specifically designed \textcolor{black}{for} modular arrays) randomly selects a user and computes the achievable sum-\ac{SE} to ensure that a performance gain is obtained. This notably speeds up the search step, although performance is limited due to the random user selection scheme. Finally, the \ac{DBS} method in \cite{GonzalezComa2021} uses the distance from the user to the \ac{BS} as a channel quality indicator to reduce the number of candidate users. However, this strategy neglects the specific patterns of modular arrays and sectored user locations.   
\begin{figure}[t]
	\centering
	\includegraphics[width=.9\columnwidth]{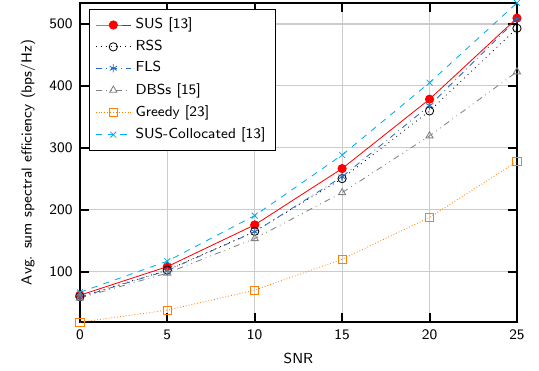}
	\caption{Sum-SE vs. SNR for the different schemes.  $N=32$, $M=4$, $K=300$, $S=13$, $f_c=15$ GHz, $r_k\in[10,60]$ m and $\theta_k\in[-60^\circ,60^\circ]$.} 
 \label{fig:achievRatesT}
\end{figure}

\subsection{Assessment of the proposed methods}
\label{sec:Perf_eval}
\textcolor{black}{We will first focus on evaluating the performance of the user selection mechanisms proposed in this work. Unless otherwise stated, we consider a MA architecture with $N=32$ and $M=4$ as in \cite{Li2023}.} Fig. \ref{fig:achievRatesT} shows the average sum spectral efficiency for the different scheduling strategies. As one can notice, the proposed methods achieve results very close to \textcolor{black}{those} of the \ac{SUS} benchmark of \cite{GuUtDi09}, which checks all the available users at each iteration. Hence, using the insights from the interference analysis performed in Sec. \ref{sec:interferenceModular}, \textcolor{black}{we define a spatial region
of interest to dramatically reduce} the number of candidate users, with a minor performance degradation. As observed in \eqref{eq:NFinterference}, users with angular coordinate $\theta$ away from $0$ suffer from less interference, and they are, therefore, easier to accommodate in the scheduling procedure. At the same time, the channel gains are directly related to the distance to the BS $r_k$, and the rectangular area employed by the \ac{RSS} proposal constitutes a compromise between these two criteria. The simpler \ac{FLS} scheme selects users disregarding the distance criterion, since the difference in the angular domain helps to alleviate the interference, thus enhancing the speed of the procedure. 
The DBS method \cite{GonzalezComa2021} only relies on the distance $r_k$ to prioritize users, thus reducing the computational cost but penalizing the overall performance results, while the \textcolor{black}{greedy} algorithm in \cite{LiDoZhYoJiZh24} leads to the worst performance result. Finally, we note that the achievable throughput with a collocated configuration provides the largest \ac{SE}. \textcolor{black}{Even though this configuration presents a poorer spatial resolution, the wide spread of the user locations in the area covered by the BS leads to milder \ac{IUI} because of the absence of grating lobes.}

\begin{figure}[t]
	\centering
	\includegraphics[width=.9\columnwidth]{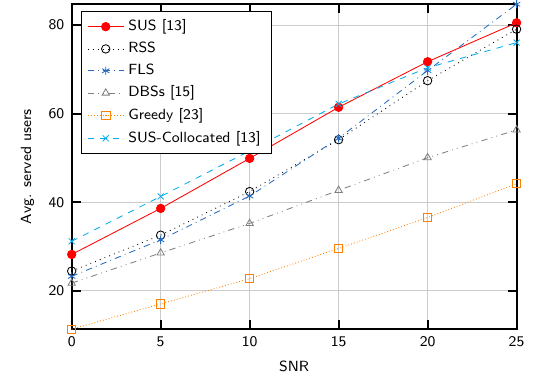}
	\caption{Number of served users vs. SNR for the different schemes.  $N=32$, $M=4$, $K=300$, $S=13$, $f_c=15$ GHz, $r_k\in[10,60]$ m and $\theta_k\in[-60^\circ,60^\circ]$.} 
 \label{fig:usersT}
\end{figure}

\begin{table}[t] 
	\centering
	\setlength{\tabcolsep}{5pt}
	\caption{Execution time (ms)}
	\begin{tabular}{|c|c|c|c|c|c|c|}
		\hline
		Method/SNR(dB)  & \textbf{0} & \textbf{5}  & \textbf{10} & \textbf{15} & \textbf{20} & \textbf{25} \\
		\hline\hline
		\textbf{SUS} & 38.2	& 50.9	&  66.3 & 78.8	& 97.3	& 111.8\\\hline
		\textbf{Greedy}	& 2.11  & 4.02 & 6.39 & 9.08 & 12.61 &20.08 \\\hline
		\textbf{RSS}	& 5.81 & 8.81 & 12.82 & 14.53 & 20.46 & 28.21 \\\hline
        \textbf{DBS}	& 4.93  & 7.07	&  9.17 & 11.45 & 14.53 & 15.40\\\hline
         \textbf{FLS} &	5.28 & 7.68 & 11.91 & 14.33 & 20.22 & 30.11\\\hline
	\end{tabular}
	\label{tab:execution_time}
\end{table}

\begin{figure}[t]
	\centering
	\includegraphics[width=.9\columnwidth]{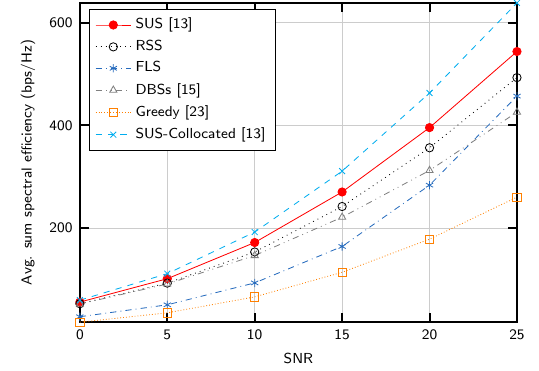}
	\caption{Sum-SE vs. SNR for the different schemes.  $N=32$, $M=4$, $K=300$, $S=13$, $f_c=15$ GHz, coordinates $x\in[10, 60]$ m, and $y\in[-60,60]$ m.} 
 \label{fig:achievRatesR}
\end{figure}

In Fig. \ref{fig:usersT}, we analyze the number of served users for each of the competing schemes. It becomes evident that curves show a similar trend as in Fig. \ref{fig:achievRatesT}, since we can expect that a larger number of served users leads to better performance. However, it is interesting to see that the use of modular arrays allows serving a larger number of users than the collocated array counterpart in the large SNR regime. This is due to the improved spatial resolution provided by modular arrays, which allows \textcolor{black}{further alleviating} the \ac{IUI}. For lower SNRs, the energy spread due to the grating lobes results in poorer channel gains for the modular arrangement, and the number of users being served is smaller than for the collocated counterpart. 

To better illustrate the benefits of the proposed user selection schemes, the execution times for all competing methods are included in Table \ref{tab:execution_time}. In all instances, the \ac{SUS} method requires the largest computational effort. The \ac{RSS} \textcolor{black}{achieves} the best trade-off performance between complexity and performance, providing over a $75\%$ reduction compared to the \ac{SUS} scheme. In general terms, the \ac{FLS} algorithm should allow for an even more reduced execution time; however, since this strategy allows to serve a larger number of users for high SNR, such a reduction may be overshadowed by the fact that complexity is proportional to the cardinality of $|\mathcal{S}|$. The reference greedy scheme developed in \cite{LiDoZhYoJiZh24} allows for a reduced execution time, at the expense of a rather low performance. Finally, the DBS also offers a limited complexity, although its performance is below the RSS and FLS schemes.

\textcolor{black}{When we compare the \ac{RSS} and \ac{FLS} strategies, we observe that the computation times are closely related to the number of served users shown in Fig. \ref{fig:usersT}, with \ac{FLS} offering better times for low SNR values, but performing slightly worse in the high SNR regime. As the number of served users for each strategy critically depends on the user distribution over the area covered by the BS, the selection of one of these strategies heavily depends on the scenario of application.} 

\begin{figure*}[t]
	\centering
	\includegraphics[width=1.8\columnwidth]{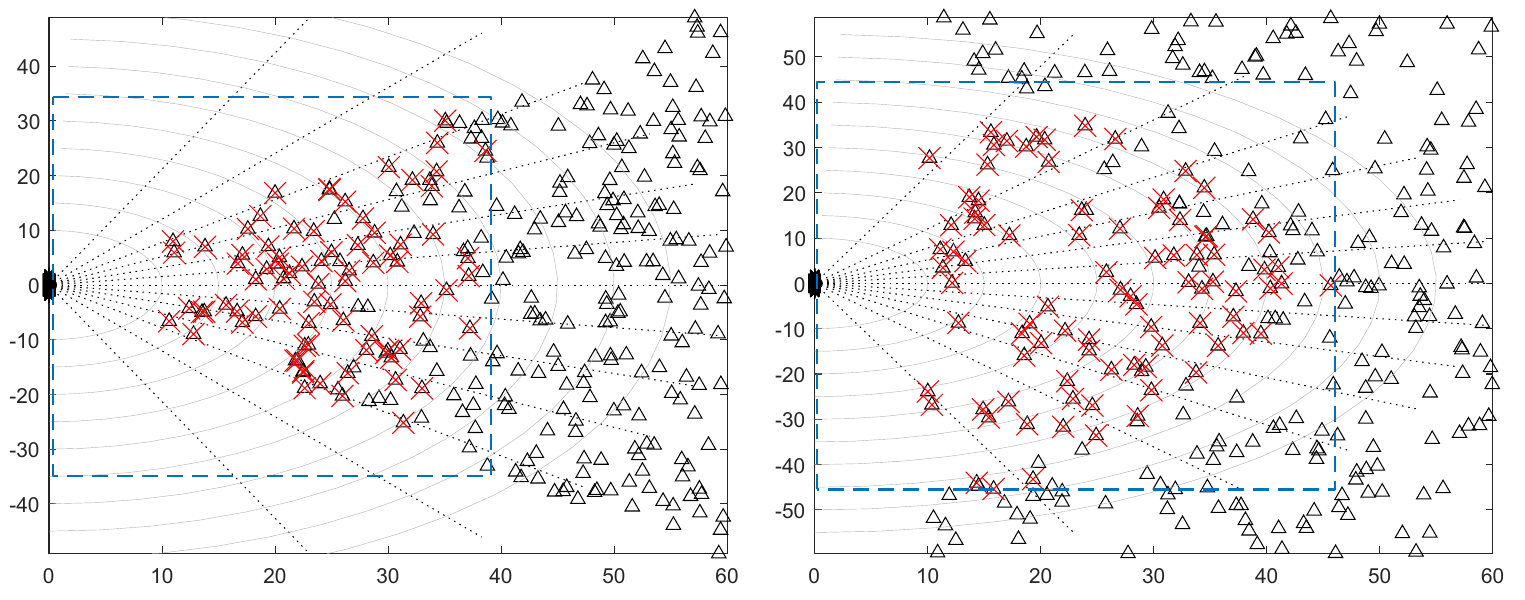}
	\caption{User distributions for the considered setups. On the left figure we show the environment defined in \cite{ITU} for evaluating 5G radio technologies in an urban environment with high user density. The right figure deploys the users in a rectangular area according to setups defined in \cite{Martín-Vega2023}.} 
 \label{fig:selection}
\end{figure*}

In Fig. \ref{fig:achievRatesR}, we now analyze the achievable performance in a second scenario inspired in \cite{Martín-Vega2023}. We consider that users are uniformly located within a rectangular area with cartesian coordinates $x\in[10, 60]$m, and $y\in[-60,60]$m, while keeping \textcolor{black}{fixed} the remaining setup parameters. Compared to the previous scenario under consideration, a larger area is covered by the \ac{BS}, which severely affects the spatial positions for the candidate users. We can observe from Fig. \ref{fig:achievRatesR} that the performance of \ac{FLS} drops, since neglects distance to the BS to select candidate users. Remarkably, \ac{RSS} still performs very close to the more complex \ac{SUS}. It is also interesting to see that having a larger area close to the \ac{BS} lessens the amount of interference. This favors the collocated configuration (which has a wider angular resolution) over the modular arrangement, exhibiting the former a superior performance.

In Fig. \ref{fig:selection}, the spatial distribution for the set of selected users is represented, for each of the scenarios under consideration. Specifically, users selected for transmission under the reference \ac{SUS} scheme are represented with red crosses, while users within the service area are depicted using black triangles. This figure is useful to confirm the intuition behind the \ac{RSS} scheme, since virtually all users selected by the \ac{SUS} scheme lie within a rectangular-shaped area, represented with dashed lines. Hence, reducing the number of candidate users has a limited impact on performance, while notably reducing the search space and hence the overall complexity.

We now aim to explore the performance of the user selection schemes as users are farther away from the BS, to assess the performance of modular arrays when the benefits of beamfocusing are less evident. First, we study the evolution of the interference pattern for two users located at the same distance $r$ to the BS. One of the users is located at $\theta=0$ whereas the other one ranges from angular locations between \textcolor{black}{$0^\circ$ and $60^\circ$}. In Fig. \ref{fig:interf}, we represent the evolution of the radiation pattern as a function of the angular location of the second user, with the case of \ac{PW} propagation included as a reference. As $r$ \textcolor{black}{increases}, the angular resolution is reduced and the behavior tends to coincide with that of the \ac{PW} case, especially for low values of $\theta$.

Based on these observations, we assess the achievable performance of the different scheduler competitors in Fig. \ref{fig:achievRatesTPW}. 
\begin{figure}[t]
	\centering
	\includegraphics[width=.9\columnwidth]{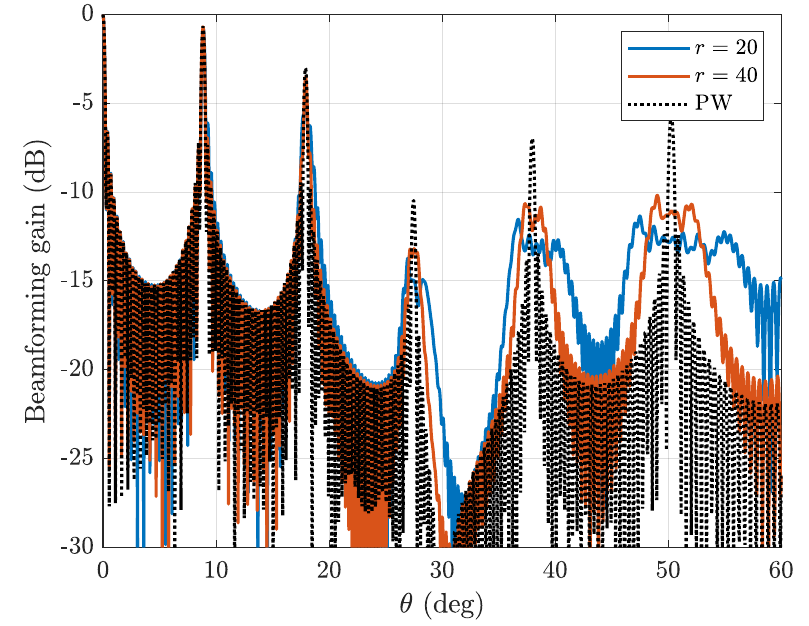}
	\caption{Interference (BF gain) suffered by a user located at a distance $r$ and angle $\theta$ when transmitting to a user located at the same distance and angle $0$, for $N=32$, $M=4$, $S=13$, $f_c=15$ GHz.} 
 \label{fig:interf}
\end{figure}
Simulation parameters are the same as in Fig. \ref{fig:achievRatesT}, but
we modify the user distance range to $r_k\in[60,150]$ m. Recall that the scheduling strategies \ac{RSS} and \ac{FLS} are designed for mitigating the \ac{IUI} in the NF region, according to the observations in Sec. \ref{sec:interferenceModular}. Hence, they cannot leverage the \ac{SW} features as users are closer to the FF region, and their achievable performance is affected. However, the \ac{RSS} technique still offers a reasonably good performance compared to all competitor schemes, while \ac{FLS} and \ac{DBS} improve the greedy scheme in \cite{LiDoZhYoJiZh24}. Noteworthy, the modular and collocated configurations exhibit a similar behavior in this scenario, since the selection of users in regions affected by the grating lobes becomes less likely.
\begin{figure}[t]
	\centering
	\includegraphics[width=.9\columnwidth]{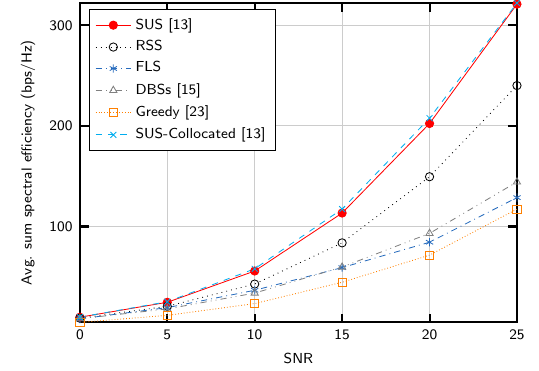}
	\caption{Sum-SE vs. SNR for the different schemes.  $N=32$, $M=4$, $K=300$, $S=13$, $f_c=15$ GHz, $r_k\in[60,120]$ m and $\theta_k\in[-60^\circ,60^\circ]$.} 
 \label{fig:achievRatesTPW}
\end{figure}

\subsection {Evaluation of different modular configurations}

\begin{figure}[htbp]
        \centering
        \includegraphics[width=.9\columnwidth]{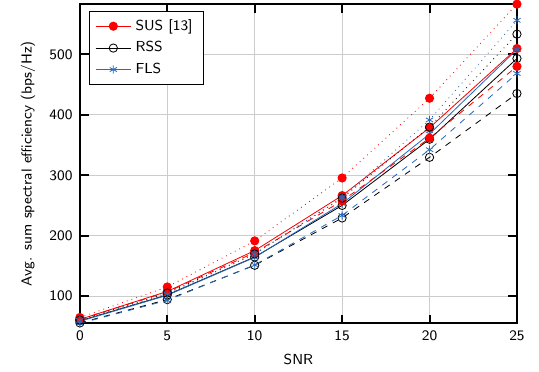}
        \caption{Sum-SE vs. SNR for different \ac{MA} schemes. $N=32$, $M=4$ (solid line) $N=4$, $M=32$ (dashed line), and $N=128$, $M=1$ (dotted line).}
        \label{fig:Comparativa_Trapz60_achievRate_withSparse}
\end{figure}

\textcolor{black}{In this section, we provide some insight regarding the influence of the antenna configuration over the performance results. To that end, we start by comparing the configuration $N=32$, $M=4$, $K=300$, $S=13$, of the previous section, with two different setups assuming an urban environment with high user density \cite{ITU2023}. In particular, we evaluate the performance of the proposed methods in a scenario with a sparse array deploying the same number of antennas $MN=128$, and of equal total size. This can be achieved by setting $N=128$, $M=1$, and $S\approx 3.2$. Further, we also include the modular array configuration $N=4$, $M=32$, and $S=125$, again resulting in the same aperture.}

\textcolor{black}{The performance results obtained for the aforementioned configurations can be seen in Fig. \ref{fig:Comparativa_Trapz60_achievRate_withSparse}, where the curve corresponding to the configuration $N=32$, $M=4$, $S=13$ is represented by a solid line, the modular configuration given by $N=4$, $M=32$, and $S=125$ is shown with a dashed line, and the sparse array $N=128$, $M=1$, $S\approx 3.2$ with a dotted line. According to \cite{LiDoZhYoJiZh24}, the spatial angular resolution of a modular array is $\frac{2}{NS}$ for $d=\lambda/2$, and the undesired grating lobes arise in the angular domain with a period of $\frac{2}{S\lambda}$. As such, the angular resolution is approximately the same for both the scenario of Sec. \ref{sec:Perf_eval} and the uniform sparse array, whereas the number of grating lobes is smaller for the sparse array. This results in better performance for the sparse antenna array. Regarding the scenario of $N=4$, $M=32$, $S=125$, the increased separation produces a large number of smaller grating lobes. This reduces the effectiveness of this antenna array configuration, leading to the worst throughput results. Observe also that the \ac{RSS} approach, intended for scenarios where $N>M$, cannot effectively handle the \ac{IUI} in the scenario with a small number of large modules, $N=4$ and $M=32$.}

\textcolor{black}{Following up with this experiment, we now evaluate different configurations for a total aperture of $4.0572$m (except for the collocated configuration, where $N=1$, $M=128$, for a reduced aperture of about $1.27$m),a fixed number of antennas $NM=128$, and a SNR of $25$dB. We consider the same user distribution as in Fig. \ref{fig:achievRatesT}, and the following \ac{MA} configurations $M=\{1,2,4,8,16,32,64,128\}$, $N=\{128,64,32,16,8,4,2,1\}$, $S=\{3.19,6.42,13,26.6,55.86,125,343,1\}$. We show the results of this experiment in Fig. \ref{fig:Comparativa_M} for the reference case of SUS strategy, since it is agnostic to the distribution of the users. For the considered scenario, configurations with large modules $M$ are generally outperformed by those with a large number of modules $N$. Analogously, reduced module separations $S$ provide better performance than configurations with farther separated modules. In other words, the increased spatial resolution provided by separating the antenna elements is mitigated by the abundance of significant grating lobes. }

\begin{figure}[t]
	\centering
	\includegraphics[width=.9\columnwidth]{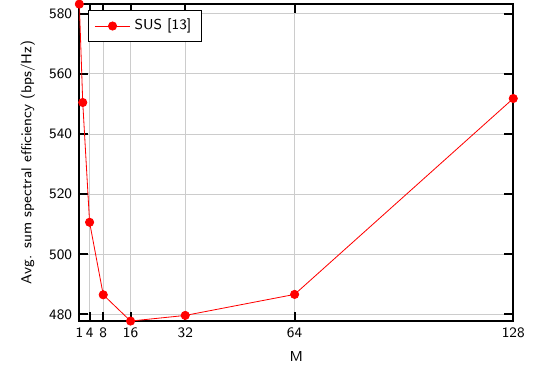}
        \caption{Sum-SE vs. $M$ for different configurations, with SNR$=25$dB.}
        \label{fig:Comparativa_M}
\end{figure}

\section{Concluding remarks}
\label{Sec:Conclusions}
In this work, we characterized the \ac{IUI} generated in a communication system composed of a modular {gMIMO} array, being able to quantify the impacts of inter-module and intra-module interferences inherent to this novel configuration. This result was then used to address the problem of user selection and precoding, designing feasible low-complexity strategies that leverage the spatial interference patterns when modular arrays are used. Performance was evaluated in a number of scenarios of interest in the FR3 band, showing that the proposed algorithms behave close to the reference SUS scheme, and largely outperform existing methods in the literature. Results confirm that the narrower spatial resolution of modular arrays allows to serve a larger number of users, and that the techniques to reduce the cardinality of the search space of potentially eligible users provide a significant decrease in the computational burden. \textcolor{black}{With regard to the benefits offered by MAs, the ultimate choice for the modular arrangement would depend on the physical restrictions for the module deployment; however, results suggest that the number of modules should be minimized whenever possible for improved SE performance.}

\textcolor{black}{MAs provide improved flexibility to implement antenna arrays when physical deployment constraints prevent from using collocated or sparse solutions. While the distributed nature of MAs brings new challenges related to inter-module synchronization and calibration, they open the door to implementing locally-distributed MIMO arrays (e.g. distribute modules sparsely in a rooftop or a facade) with reduced operational complexity compared to cell-free MIMO counterparts deployed in wider areas. While the consideration of LOS propagation is a good approximation in outdoor deployments in the FR3 band \cite{Miao2025}, the design of user selection mechanisms in richer scattering conditions is likely to bring additional challenges that deserve further attention.
}

\appendices
\section{Fresnel Integral}
\label{ap:Fresnel}
First, we apply the second-order Taylor approach to $r_{k,n}$ and $r_{j,n}$ as as function of the module index $n$, leading to $r_{k,n}-r_{j,n}\approx \frac{\lambda}{2}(\bar{a}n^2+\bar{b}n)$. Then, the right-hand side of \eqref{eq:interModuleInterf} is approximated as
\begin{align}
\frac{1}{N} \left|\sum_{n\in\mathcal{N}}e^{j\frac{2\pi}{\lambda} (r_{k,n}-r_{j,n})} \right|&\approx\frac{1}{N}\left|\sum_{n\in\mathcal{N}}e^{j\pi(\bar{a}n^2+\bar{b}n)} \right|\notag\\
&\approx\frac{1}{N}\left|\int_{-\frac{1}{2}}^{\frac{1}{2}}e^{j\pi (ax^2+bx)} dx\right|,\label{eq:integral}
\end{align}
Note that we can assume $\bar{a}>0$ without loss of generality. Further, in the second line of \eqref{eq:integral} the summation is approximated by a continuous integral in $x\in[-\frac{1}{2},\frac{1}{2}]$ such that $x=\frac{n}{N}$ \cite{LuZe21}, 
with $a=N^2\bar{a}$ and $b=N\bar{b}$. Note that this approximation holds tight for large values of $N$. At this point, it is useful to rewrite the exponent of the integral as follows
\begin{align}  
\left|\int_{-\frac{1}{2}}^{\frac{1}{2}}e^{j\pi (ax^2+bx)} dx\right|  
=\left|\int_{-\frac{1}{2}}^{\frac{1}{2}}e^{j\pi a\big(x+\frac{b}{2a}\big)^2} dx\right|\notag
\end{align}
which can be expressed, after a change of variables, as%
\begin{align*}
   \int_{-\frac{1}{2}}^{\frac{1}{2}}e^{j\pi a\big(x+\frac{b}{2a}\big)^2} dx \nonumber \\
   & \hspace{-1cm} = \frac{1}{\sqrt{2a}}\int_{t^{-}}^{t^{+}}\left[\cos\left(\frac{\pi}{2}t^2\right)+j\sin\left(\frac{\pi}{2}t^2\right)\right]dt
\end{align*}
\noindent with the integration limits $t^{-}$ and $t^{+}$ as defined in Prop. \ref{propo1}. Finally, using the definition of the Fresnel function $F(x)$ we get
\begin{align*}    \left|\int_{-\frac{1}{2}}^{\frac{1}{2}}e^{j\pi (ax^2+bx)} dx\right|= \frac{1}{\sqrt{2a}}\left|F(t^{+})-F(t^{-})\right|,
\end{align*}
which completes the proof.

\section{Geometric area for the approximation in Prop. \ref{propo2}}
\label{ap:Dirichletkernel}
We aim at finding the geometric area where the approximation of Prop. \ref{propo2} is tight, i.e., $\frac{1}{M}|\B{b}^H(\theta)\B{b}(\theta_{n})|\approx 0.95$. By letting $\sin(\theta_n)=\sin(\theta)+\epsilon$, and using $d=\lambda/2$ for simplicity, the inner product of these vectors can be written as
\begin{align*}
    \left| \frac{\sin{\left(\tfrac{\pi}{2}M(\sin(\theta_n)-\sin(\theta) )\right)}}{M\sin{\left(\tfrac{\pi}{2}( \sin(\theta_n)-\sin(\theta)) \right)}} \right| & =\left| \frac{\sin{\left(\tfrac{\pi}{2}M\epsilon\right)}}{M\sin{\left(\tfrac{\pi}{2} \epsilon \right)}} \right|\\
    &\approx\left| \frac{\sin{\left(\tfrac{\pi}{2}M\epsilon\right)}}{M\tfrac{\pi}{2} \epsilon} \right|,
\end{align*}
where we have considered a small offset $\epsilon\approx 0$ for the approximation \cite{Bjorson2024}. Under this approach we compute the value of $\epsilon$ which provides the desired accuracy, i.e.,  $\sin(0.18\pi)/(0.18\pi)\approx 0.95$, leading to $\epsilon=\pm\frac{2}{M}0.18$. Next, we can exploit the geometrical relationship between the user angle $\theta$ and the module angle $\theta_n$, as follows
\begin{equation*}
    \sin(\theta_n) = \frac{r\sin(\theta)-nSd}{r_n}
    =\sin(\theta)+ \frac{-nSd-\Delta r_n\sin(\theta)}
    {r+\Delta r_n}
\end{equation*}
where $\Delta r_n=r_n-r$. For the sake of notation simplicity, we assume that $\sin(\theta)>0$ and, accordingly, the larger angular differences lie on the distant modules, that is, those with $n<0$. Then, using the fact that $\Delta r_n>0$ in this scenario, we have that $-nSd\leq\frac{(N-1)}{2}Sd$, $-\Delta r_n\sin(\theta)\leq 0$ and $r+\Delta r_n\geq r$, thus establish the following bound  
\begin{align}
   \frac{-nSd-\Delta r_n\sin(\theta)}{r+\Delta r_n}&\leq \frac{(N-1)Sd}{2r}.\label{eq:sinShiftBound}
\end{align}
Equating this bound to $\epsilon$ the result in \eqref{eq:rApprox} follows.

\section{Low-complexity Gram matrix inversion}
\label{ap:Schur}
This matrix inversion procedure exploits the iterative nature of the proposed algorithms, which successively include a new user at each step. Then, at the ($n+1$)-th iteration of the algorithm, the composite channel matrix of \eqref{eq:zf_precoders} can be written as
\begin{equation*}
	\B{H}^{(n+1)}_{\bar{k}}=\begin{bmatrix}
		\B{H}^{(n)}_{\bar{k}}\\
		\B{h}_{\rm k}^H
	\end{bmatrix}
\end{equation*}
where $\B{H}^{(n)}_{\bar{k}}\in\mathbb{C}^{n\times MN}$ comprises the channels of the users selected in the $(n-1)$ former iterations, and $\B{h}_{\rm k}$ is the channel selected at the iteration $n$. To compute the precoders in \eqref{eq:zf_precoders}, it is necessary to obtain the Gram matrix $\B{H}^{(n+1)}_{\bar{k}}(\B{H}^{(n+1)}_{\bar{k}})^H$, and perform the matrix inversion 
\begin{equation}
	\B{G}^{(n+1)}=\left(\B{H}^{(n+1)}_{\bar{k}}(\B{H}^{(n+1)}_{\bar{k}})^H\right)^{-1}
\end{equation}
To alleviate the computational load, we propose to perform a block decomposition of the Gram matrix, leading to
\begin{equation*}
	\B{G}^{(n+1)}=\left(\begin{bmatrix}
		\B{H}^{(n)}_{\bar{k}}(\B{H}^{(n)}_{\bar{k}})^H & \B{H}^{(n)}_{\bar{k}}\B{h}_{\rm k}\\
		\B{h}_{\rm k}^H(\B{H}^{(n)}_{\bar{k}})^H & \|\B{h}_{\rm k}\|^2
	\end{bmatrix}\right)^{-1}.
\end{equation*}
Observe that the block $\B{H}^{(n)}_{\bar{k}}(\B{H}^{(n)}_{\bar{k}})^H$ was already included in the previous iteration of the algorithm, since $\B{G}^{(n)}=(\B{H}^{(n)_{\bar{k}}}(\B{H}^{(n)}_{\bar{k}})^H)^{-1}$. We now apply the Schur complement to rewrite $\B{G}^{(n+1)}
$ as follows \cite{bernstein2005matrix}
\begin{equation*}
    \B{G}^{(n+1)}=\upsilon\begin{bmatrix}
  \tfrac{1}{\upsilon}\B{G}^{(n)}+\B{G}^{(n)}\B{\xi}\B{\xi}^H\B{G}^{(n)} & -\B{G}^{(n)}\B{\xi}\\
		-\B{\xi}^H\B{G}^{(n)} & 1
	\end{bmatrix},
\end{equation*}
where we introduced the auxiliary vector $\B{\xi}=\B{H}^{(n)}\B{h}_{\rm k}$, and $\upsilon^{-1}=\|\B{h}_{\rm k}\|^2-\B{\xi}^H\B{G}^{(n)}\B{\xi}$, is the Schur complement of the block $\B{G}^{(n)}$ of the matrix $\B{G}^{(n+1)}$. From the former equality, it is clear that the block-wise inversion only relies on the calculation of matrix-vector products. Moreover, the vector $\B{\xi}$ and the product $\B{G}^{(n)}\B{\xi}$ can be computed only once, thus reducing the computational load. According to these observations, the computational cost of obtaining $\B{G}^{(n+1)}$ is in the order of $\mathcal{O}(|\mathcal{S}|MN)$, where the number of selected users $|\mathcal{S}|$ is the same as the iteration number $n$.

\bibliographystyle{IEEEtran}

\bibliography{References_XL_MIMO}

\end{document}